\documentclass[10pt,a4paper]{amsart}
\usepackage[utf8]{inputenc}
\usepackage[english]{babel}
\usepackage[mathcal]{euscript}
\usepackage{mathrsfs}  
\usepackage{amsmath}
\usepackage{amsfonts}
\usepackage{amssymb}
\usepackage{ebproof}
\usepackage{stmaryrd}
\usepackage{subcaption}
\usepackage{graphicx}
\usepackage[shortlabels]{enumitem}
\usepackage{proof}
\usepackage{amsthm}
\usepackage{stackengine}
\usepackage{mathtools}
\usepackage{relsize}
\usepackage[all,cmtip]{xy}
\usepackage{tikz-cd}
\usepackage{xfrac}
\usepackage[left=2cm,right=2cm,top=2cm,bottom=2cm]{geometry}
\allowdisplaybreaks
\usepackage{listings}
\usepackage{xcolor}

\definecolor{haskellkeyword}{RGB}{0,0,180}
\definecolor{haskellcomment}{RGB}{0,120,0}
\definecolor{haskellstring}{RGB}{163,21,21}
\definecolor{haskelltype}{RGB}{43,145,175}
\definecolor{haskellbg}{RGB}{248,248,248}

\lstdefinelanguage{Haskell}{
  sensitive=true,
  morekeywords={
    case,class,data,default,deriving,do,else,
    foreign,if,import,in,infix,infixl,infixr,
    instance,let,module,newtype,of,then,type,
    where,qualified,as,hiding,family,forall,
    mdo,rec,proc
  },
  morecomment=[l]--,
  morecomment=[n]{\{-}{-\}},
  morestring=[b]",
  alsoletter={_,'},
}

\usepackage[textsize=tiny,textwidth=46]{todonotes}
\newcommand{\Shuff}{\mathrm{Sf}}

\newcommand{\Syn}{\catfont{Syn}}
\newcommand{\Theo}{\mathscr{T}}

\DeclarePairedDelimiter\norm{\lVert}{\rVert}%

\newcommand{\Quant}{\catfont{V}}
\newcommand{\Cats}{\catfont{Cat}}
\newcommand{\VCat}{\Quant\text{-}\Cats}
\newcommand{\VCatSy}{\Quant\text{-}\Cats_{\mathsf{sym}}}
\newcommand{\VCatSe}{\Quant\text{-}\Cats_{\mathsf{sep}}}
\newcommand{\VCatSS}{\Quant\text{-}\Cats_{\mathsf{sym,sep}}}
\newcommand{\CPTP}{\catfont{CPTP}}
\newcommand{\CPM}{\catfont{CPM}}
\newcommand{\Kar}[1]{\mathcal{K}{(#1)}}

\makeatletter
\DeclareRobustCommand{\iscircle}{\mathord{\mathpalette\is@circle\relax}}
\newcommand\is@circle[2]{%
  \begingroup
  \sbox\z@{\raisebox{\depth}{$\m@th#1\bigcirc$}}%
  \sbox\tw@{$#1\square$}%
  \resizebox{!}{\ht\tw@}{\usebox{\z@}}%
  \endgroup
}
\makeatother

\newcommand{\catfont}[1]{\mathsf{#1}}
\newcommand{\cop}{\catfont{op}}
\newcommand{\fp}{\catfont{fp}}

\newcommand{\catC}{\catfont{C}}

\newcommand{\catD}{\catfont{D}}

\newcommand{\Set}{\catfont{Set}}

\newcommand{\Subs}[2]{\catfont{Sub}_{}}

\newcommand{\Ban}{\catfont{Ban}}

\newcommand{\Met}{\catfont{Met}}

\newcommand{\funfont}[1]{#1}
\newcommand{\funF}{\funfont{F}}

\newcommand{\funG}{\funfont{G}}

\newcommand{\sfunfont}[1]{\mathrm{#1}}

\newcommand{\Id}{\sfunfont{Id}}

\newcommand{\Diag}{\mathscr{D}}

\newcommand{\adjuncte}[2]{%
\begin{tikzcd}[ampersand replacement=\&, column sep=4em]
  \arrow[r, bend left=25, shift left=.8ex, "#2", hook]
  \&
  \arrow[l, bend left=25, shift left=.8ex, "#1"]
  \arrow[from=1-1, to=1-2, phantom, "\top"]
\end{tikzcd}%
}
\newcommand{\adjuncteb}[2]{%
\begin{tikzcd}[ampersand replacement=\&, column sep=4em]
  \arrow[r, bend left=25, shift left=.8ex, "#2"]
  \&
  \arrow[l, bend left=25, shift left=.8ex, "#1",hook]
  \arrow[from=1-1, to=1-2, phantom, "\top"]
\end{tikzcd}%
}

\newcommand{\adjunctopebb}[2]{%
\begin{tikzcd}[ampersand replacement=\&, column sep=4em]
  \arrow[r, bend left=25, shift left=.8ex, "#2"]
  \&
  \arrow[l, bend left=25, shift left=.8ex, "#1", hook]
  \arrow[from=1-1, to=1-2, phantom, "\bot"]
\end{tikzcd}%
}

\newcommand\retract[2]{\xymatrix@=8ex{\ar@{}[r]|{}\ar@<1mm>@/^2mm/@{^{(}->}[r]^{{#2}}
& \ar@<1mm>@/^2mm/@{->>}[l]^{{#1}}}}
\newcommand{\pv}[2]{\langle #1, #2 \rangle}

\DeclareMathOperator{\colim}{\mathrm{colim}}
\newcommand{\inl}{\mathrm{inl}}
\newcommand{\inr}{\mathrm{inr}}

\newcommand{\app}{\mathrm{app}}

\newcommand{\comp}{\cdot}
\newcommand{\id}{\mathrm{id}}
\newcommand{\sw}{\mathrm{sw}}
\newcommand{\spl}{\mathrm{sp}}
\newcommand{\join}{\mathrm{jn}}
\newcommand{\exch}{\mathrm{exch}}
\newcommand{\sh}{\mathrm{sh}}
\newcommand{\dist}{\mathrm{dist}}

\newcommand{\sem}[1]{\left \llbracket #1 \right \rrbracket}

\newcommand{\Nats}{\mathbb{N}}
\newcommand{\Reals}{\mathbb{R}}

\newcommand{\Complex}{\mathbb{C}}

\newcommand{\ie}{\emph{i.e.}}
\newcommand{\eg}{\emph{e.g.}}

\newcommand{\typefont}[1]{\mathbb{#1}}

\newcommand{\typeA}{\typefont{A}}
\newcommand{\typeI}{\typefont{I}}
\newcommand{\typeB}{\typefont{B}}
\newcommand{\typeC}{\typefont{C}}

\newcommand{\rulename}[1]{(\textbf{#1})}

\newcommand{\vljud}{\, \triangleright \, }

\newcommand{\ielim}[2]{#1 \> \> \mathtt{to} \, \ast. \> \> #2}
\newcommand{\telim}[2]{\mathtt{pm} \> \> #1 \> \> \mathtt{to} \> \> x \otimes y. \> \> #2}
\newcommand{\celim}[3]{\mathtt{case} \> \> #1 \> \> \{ \dintrol{x} \rightarrow #2 ; \>\> 
\dintror{y} \rightarrow #3 \}  }
\newcommand{\ccelim}[3]{\mathtt{case} \> \> #1 \> \> \{ \ddintrol{x} \rightarrow #2 ; \>\> 
\ddintror{y} \rightarrow #3 \}  }
\newcommand{\dintrol}[1]{\mathtt{inl}_\typeB (#1)}
\newcommand{\dintror}[1]{\mathtt{inr}_\typeA (#1)}
\newcommand{\ddintrol}[1]{\mathtt{inl} (#1)}
\newcommand{\ddintror}[1]{\mathtt{inr}(#1)}

\theoremstyle{definition}
\newtheorem{definition}{Definition}[section]
\theoremstyle{definition}
\newtheorem{example}[definition]{Example}
\theoremstyle{definition}
\newtheorem{lemma}[definition]{Lemma}
\theoremstyle{definition}
\newtheorem{proposition}[definition]{Proposition}
\theoremstyle{definition}

\theoremstyle{definition}
\newtheorem{theorem}[definition]{Theorem}
\theoremstyle{definition}
\newtheorem{remark}[definition]{Remark}

\author{Renato Neves}
\address{University of Minho \& INESC-TEC, Portugal}
\email{nevrenato@di.uminho.pt}
\author{Bruna Salgado}
\address{University of Minho \& INESC-TEC, Portugal
 and Forschungszentrum Jülich, Germany}
\email{b.salgado@fz-juelich.de}

\thanks{
This work is financed by National Funds through FCT - Fundação para a Ciência e
a Tecnologia, I.P. (Portuguese Foundation for Science and Technology) within
project Banksy, with reference COMPETE2030-FEDER-00892000.
}
\title{Quantalic lambda-calculus and additive disjunction}
\begin{document}

\begin{abstract}
        Motivated by the need to reason about case statements quantitatively,
        we extend quantalic linear $\lambda$-calculus with additive
        disjunction.  We show that the resulting equational system is sound. We
        also show that when certain continuity properties (of the underlying
        quantale) are adopted, it is additionally (approximately) complete. 

        We present several models of the extended calculus, involving for
        example meta-theoretical properties in categorical logic (gluing),
        probabilistic, and quantum computation.  As a concrete application, we
        illustrate how a probabilistic model, based on Banach spaces, can be
        synergistically used with the calculus' equational system to reason
        about Cauchy sequences of random walks.  This highlights the emergent
        shift from ``program semantics as the science of program equivalence"
        to flexible, quantitative perspectives, involving functional analysis
        and beyond.
\end{abstract}

\maketitle

\section{Introduction}
\label{sec:intro}
\subsection{Motivation and context}
Previous work~\cite{dahlqvist22,dahlqvist23} introduced a quantalic
generalisation of linear $\lambda$-calculus, the exponential-free,
\emph{multiplicative fragment} of linear logic. Here we investigate the
incorporation of \emph{additive} structure to this body of work. Our focus is
specifically on the \emph{additive disjunction operator} $\oplus$, which is
typically interpreted via coproducts and gives rise to `case' statements (\ie\
conditionals). Although already interesting from a theoretical point of view,
our motivation for this work stems from practice: in trying to reason
\emph{quantitatively} about (higher-order) programs we often fell short when
these involved conditionals.  Consider for example a single step in a random
walk, written in \textsc{Haskell} style pseudo-code,
\begin{lstlisting}
step :: (Real (*@$\to$@*) Bool) (*@$\to$@*) (Real (*@$\to$@*) Real) 
step p = do x (*@$\leftarrow$ $\mu$@*) ; if p x then jumpLeft else jumpRight
\end{lstlisting}
In words, after receiving a predicate \lstinline{p} the program samples from
distribution $\mu$ and tests the sample against the predicate. Then, depending
on the test result, it returns either action
\lstinline[mathescape]{jumpLeft}
or \lstinline{jumpRight}. In many cases it is useful to substitute $\mu$ by an
approximate distribution and similarly for predicates. For example, if
\lstinline{p} equals $(-)< \frac{1}{2} \sqrt{2}$ it may be better to replace
$\frac{1}{2} \sqrt{2}$ by an approximate rational number for computational
reasons. What is the effect of such a substitution on the overall program?
Similarly real-world scenarios often require knowledge about the effects of
small perturbations on the two different execution branches.  More abstractly,
it is important to ensure that such approximations and/or small perturbations
are congruent, \ie\
\begin{equation}
        \label{eq:cong}
        t \approx s \, \Longrightarrow \, C[t] \approx C[s]
\end{equation}
for all `program contexts' $C[-]$.  But actually, as explained below,
applications involving $\oplus$ are broader than this, and fit in the more
general pattern of reasoning syntactically about co-Cartesian categories
enriched over so-called `\emph{generalised metric
spaces}'~\cite{paseka00,crole93}.

\subsection{Additive disjunction in other works}
A number of important results already considered additive structure in a
quantitative setting.
References~\cite{mardare2016quantitative,mardare2017axiomatizability,mio24,jurka24}
for example are framed in the context of metric (and sometimes ordered)
\emph{universal algebra} and involve additive conjunction (\ie\ $\&$),
typically interpreted via categorical products.  In the more general
higher-order setting, \cite{lago22} enforces \emph{additive conjunction} to be
left adjoint to implication (interpreted via Cartesian-closedness), with a
series of negative results emerging from this.  Our work is orthogonal to these
in that we study the dual of $\&$ (\ie\ $\oplus$) and furthermore assume that
the left adjoint of implication is \emph{multiplicative conjunction} (\ie\
$\otimes$) instead of the additive counterpart. Among other things, this
removes the obstacles discussed in~\cite{lago22}. Of course, we are also
working at the more general level of a quantale, which goes beyond metric and
ordered structures~\cite{paseka00}.

\subsection{Contributions and document structure}
The overarching idea behind quantalic $\lambda$-calculus is that (in)equations
between $\lambda$-terms are labelled by elements $q$ of a quantale. For
example, in the setting of the metric (or Lawvere) quantale, the expression $t
=_q s$ reads as $t$ and $s$ being \emph{at most} at distance $q$ from each
other. The current paper aims at advancing this idea.  Specifically, in
Section~\ref{sec:calculus} we recall quantalic
$\lambda$-calculus~\cite{dahlqvist22,dahlqvist23} and extend it with additive
disjunction. We show that the extension preserves the `unique derivation'
property and the aforementioned congruence property~\eqref{eq:cong}. We also
showcase the extension at work by proving a \emph{quantitative} form of
extensionality for disjunctive types. In Section~\ref{sec:sem} we extend the
calculus' categorical semantics~\cite{dahlqvist22,dahlqvist23} to accomodate
additive disjunction.  This is driven by the following idea: while the
multiplicative structure of the calculus uses an enrichment over the monoidal
structure of a category, the new additive machinery uses an enrichment over the
\emph{Cartesian structure} of the same category.  We show that this extension
is sound and moreover complete if certain \emph{continuity} properties (of the
underlying quantale) are adopted.  We also prove `approximate completeness' in
the absence of the notorious \emph{Archimedean
rule}~\cite{mardare2016quantitative,mardare2017axiomatizability} (it involves
infinitely many premisses and is therefore problematic).

In Section~\ref{sec:models} we present a series of models of our calculus.
Among others, we show that the \emph{gluing construction} extends to the
quantalic setting which gives rise to several ways of studying meta-theoretical
properties of our calculus (and similar ones) categorically
(see~\eg~\cite{lafont88,crole93,hasegawa99,hasegawa99b,hyland03}). Another
interesting case is that of \emph{finite-product preserving} presheaves
enriched over `\emph{generalised metric spaces}': this category is shown to be
a model our calculus by framing it as a \emph{reflective} subcategory of
arbitrary presheaves enriched over the same basis. What is more, this
reflection is built via a first-order variant of our calculus (similar in
spirit to~\cite{rosicky94,fu22} for the \emph{non-enriched} setting). Such a
model is useful not only for studying our calculus' theoretical properties
(\eg\ `concrete completeness' and conservativity results~\cite{johnstone02})
but also to embed co-Cartesian monoidal categories into co-Cartesian monoidal
closed ones in the quantalic setting.

In Section~\ref{sec:prob_quantum} we explore different models of the calculus
in the context of probabilistic~\cite{dahlqvist19,barthe20} and quantum
programming~\cite{selinger2004towards,malherbe13} -- two nascent computational
paradigms. In the former case, we involve the category of Banach spaces and
linear contractions~\cite{dahlqvist19,dahlqvist23}, and illustrate how the
whole framework can be used to reason about Cauchy sequences of (higher-order)
random walks. This highlights the emergent shift from ``program semantics as
the science of program equivalence''~\cite{lago22} to flexible, quantitative
perspectives, involving functional analysis and beyond.  The latter case relies
more heavily on categorical machinery: starting with a certain, enriched
category of \emph{quantum channels} we adopt its \emph{Karoubi
envelope}~\cite{borceux94} to take conditionals into account.  Then in order to
accomodate higher-order structure, we embed this envelope into the category of
finite-product preserving presheaves that was mentioned above.

Finally Section~\ref{sec:conc} concludes and briefly discusses future work. The
paper assumes familiarity with category theory and $\lambda$-calculus.
Section~\ref{sec:prob_quantum} requires basic knowledge of functional analysis
as well. In the paper, we will use the term `co-Cartesian' to refer to
categories with binary coproducts and the term `autonomous' to refer to
symmetric monoidal closed categories.

\section{Quantalic linear lambda-calculus with additive disjunction}
\label{sec:calculus}

\subsection{The term language}
We start by presenting our calculus' term language which is the same as that of
linear $\lambda$-calculus with additive disjuntion~\cite{crole93,maietti05}.
Our presentation will be mostly standard, except for the fact that we will
involve the notion of a \emph{shuffle}~\cite{shulman19,dahlqvist22,dahlqvist23}
(details available below).
First, the calculus' grammar of types is defined by,
\[
  \typeA ::=  X \in G \mid \typeI \mid 
  \typeA \otimes \typeA \mid \typeA \multimap \typeA \mid \typeA \oplus \typeA
\] 
where $G$ is a set of ground types defined \emph{a priori}.  We use Greek
letters $\Gamma,\Delta, E,\dots$ to denote typing contexts, \ie\ lists $x_1 :
\typeA_1,\dots,x_n : \typeA_n$ of typed variables such that each $x_i$ occurs
at most once in $x_1,\dots,x_n$. As already mentioned, our presentation
involves the notion of a shuffle~\cite{shulman19,dahlqvist22,dahlqvist23} --
formally a permutation of typed variables in a context sequence
$\Gamma_1,\dots,\Gamma_n$ such that for all $1 \leq i \leq n$ the relative
order of the variables in $\Gamma_i$ is preserved. For example, if $\Gamma_1 =
x : \typeA, y : \typeB$ and $\Gamma_2 = z : \typeC$ then $z : \typeC, x :
\typeA, y : \typeB$ is a shuffle but $y : \typeB, x : \typeA, z : \typeC$ is
\emph{not} -- because we changed the order in which $x$ and $y$ appear in
$\Gamma_1$.  We denote by $\Shuff(\Gamma_1;\dots;\Gamma_n)$ the set of shuffles
on $\Gamma_1,\dots,\Gamma_n$. As explained
in~\cite{shulman19,dahlqvist22,dahlqvist23b}, the use of shuffles is crucial
for the calculus to have the \emph{exchange} and \emph{unique derivation}
properties at the same time (see also Theorem~\ref{thm:properties} below).

Next, consider a set $\Sigma$ of sorted operation symbols $f :
\typeA_1,\dots,\typeA_n \to \typeA$ with $n \geq 1$. The calculus' judgement
formation rules are presented in Figure~\ref{fig:judg_rules}. In a nutshell,
they conservatively extend those in~\cite{dahlqvist22,dahlqvist23} with
additive disjunction: specifically they now include introduction and
elimination rules that enable the typical case-statement apparatus in
programming (\emph{cf.} dotted line in Figure~\ref{fig:judg_rules}). Note that
context sequences in this figure (\eg\ $\Gamma;\Delta$) appear under the tacit
assumption that different contexts (in a sequence) do not share variables.
Note as well our adoption of Church's typing style.
\begin{figure}[h]
    \begin{equation*}
            \begin{prooftree}
                    \infer0[\rulename{hyp}]{ x : \typeA  \vljud x : \typeA}
            \end{prooftree}
            \hspace{2.5cm}
            \begin{prooftree}
                    \hypo{\Gamma_i \vljud t_i : \typeA}
                    \hypo{f \in \typeA_1, \dots , \typeA_n \to \typeA \in \Sigma}
                    \hypo{E \in \Shuff(\Gamma_1; \dots ; \Gamma_n)}
                    \infer3[\rulename{ax}]{E \vljud f(t_1,\dots,t_n) : \typeA}
            \end{prooftree}
    \end{equation*}
    \vspace{0.3cm}
    \begin{equation*}
            \begin{prooftree}
                    \infer0[\rulename{$\typeI_i$}]{ \vljud \ast : \typeI}
            \end{prooftree}
            \hspace{3.5cm}
            \begin{prooftree}
                    \hypo{\Gamma \vljud t : \typeI}
                    \hypo{\Delta \vljud s : \typeA}
                    \hypo{E \in \Shuff(\Gamma; \Delta)}
                    \infer3[\rulename{$\typeI_e$}]{
                    E \vljud \ielim{t}{s} : \typeA}
            \end{prooftree}
    \end{equation*}
    \vspace{0.3cm}
    \begin{equation*}
            \begin{prooftree}
                    \hypo{\Gamma \vljud t : \typeA}
                    \hypo{\Delta \vljud s : \typeB}
                    \hypo{E \in \Shuff(\Gamma;\Delta)}
                    \infer3[\rulename{$\otimes_i$}]{E \vljud t \otimes s : 
                                \typeA \otimes \typeB}
            \end{prooftree}
            \hspace{0.6cm}
            \begin{prooftree}
                    \hypo{\Gamma \vljud t : \typeA \otimes \typeB}
                    \hypo{\Delta, x : \typeA, y : \typeB \vljud s : \typeC}
                    \hypo{E \in \Shuff(\Gamma; \Delta)}
                    \infer3[\rulename{$\otimes_e$}]{
                    E \vljud \telim{t}{s} : \typeC}
            \end{prooftree}
    \end{equation*}
    \vspace{0.3cm}
    \begin{equation*}
            \begin{prooftree}
                    \hypo{\Gamma, x : \typeA \vljud t : \typeB}
                    \infer1[\rulename{$\multimap_i$}]{
                        \Gamma \vljud \lambda x : 
                        \typeA. \> t : \typeA \multimap \typeB}
            \end{prooftree}
            \hspace{2.5cm}
            \begin{prooftree}
                    \hypo{\Gamma \vljud t : \typeA \multimap \typeB}
                    \hypo{\Delta \vljud s : \typeA}
                    \hypo{E \in \Shuff(\Gamma; \Delta)}
                    \infer3[\rulename{$\multimap_e$}]{
                    E \vljud t \>  s : \typeB}
            \end{prooftree}
    \end{equation*}

    \vspace{0.3cm}
    \dotfill
    \vspace{-0.2cm}
    \begin{equation*}
    \begin{prooftree}
        \hypo{\Gamma \vljud t: \typeA}
        \infer1[\rulename{$\oplus_i$}]{\Gamma \vljud \dintrol{t}: \typeA \oplus \typeB}
    \end{prooftree}
    \hspace{4.5cm}
    \begin{prooftree}
        \hypo{\Gamma \vljud t: \typeB}
        \infer1[\rulename{$\oplus_i$}]{\Gamma \vljud \dintror{t}: \typeA \oplus \typeB}
    \end{prooftree} 
    \end{equation*}
    \vspace{0.3cm}
    \begin{equation*}
    \begin{prooftree}
        \hypo{\Gamma \vljud t: \typeA \oplus \typeB}
        \hypo{\Delta, x: \typeA \vljud s: \typeC}
        \hypo{\Delta, y: \typeB \vljud u: \typeC}
        \hypo{E \in \Shuff(\Gamma; \Delta)}
        \infer4[\rulename{$\oplus_e$}]{E \vljud \celim{t}{s}{u} : \typeC}
    \end{prooftree}
    \end{equation*}
    \caption{Judgement formation rules of quantalic linear $\lambda$-calculus with
    additive disjunction.}
    \label{fig:judg_rules}
\end{figure}

The extended calculus inherits desirable features from the original
calculus~\cite{dahlqvist22,dahlqvist23}. Most notably it inherits the
\emph{unique derivation}, \emph{substitution}, and \emph{exchange} properties,
which rely on the aforementioned shuffling mechanism and Church's typing style.
\begin{theorem}\label{thm:properties}
        The judgement formation rules in Figure~\ref{fig:judg_rules} enjoy the
        following properties.
	\begin{enumerate}
                \item (Unique typing) For any two judgements $\Gamma \vljud t :
                        \typeA$ and $\Gamma \vljud t : \typeA'$, we have
                        $\typeA = \typeA'$; \label{i:unique_type}
                \item (Unique derivation) Every judgement $\Gamma \vljud t :
                        \typeA$ has a unique derivation; \label{i:unique_der}
                \item (Exchange) For every judgement $\Gamma, x : \typeA, y :
                        \typeB, \Delta \vljud t : \typeC$ \label{i:exch} we can
                        derive $\Gamma, y : \typeB, x : \typeA, \Delta \vljud t
                        : \typeC$; 
                \item (Substitution) For every two judgements $\Gamma, x : \typeA
                        \vljud t : \typeB$ and $\Delta \vljud s : \typeA$ we
                        can derive $\Gamma,\Delta \vljud t[s/x] : \typeB$.
                        \label{i:subst} 
	\end{enumerate}
\end{theorem}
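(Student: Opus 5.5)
All four properties go by structural induction, either on terms or on derivations. The cases for the multiplicative fragment are the ones already treated in \cite{dahlqvist22,dahlqvist23}. So the only new work is the three rules below the dotted line in Figure~\ref{fig:judg_rules}, and I would present the argument as an extension of those earlier proofs.

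For \emph{unique typing}, I would argue by induction on $t$. The term constructors correspond one-to-one to the rules: $\dintrol{t}$ and $\dintror{t}$ can only come from $\rulename{$\oplus_i$}$, and $\mathtt{case}$ only from $\rulename{$\oplus_e$}$. Because we use Church-style typing, $\dintrol{t}$ carries its missing summand $\typeB$, so its type $\typeA \oplus \typeB$ is fixed once the induction hypothesis fixes the type $\typeA$ of $t$. For $\celim{t}{s}{u}$, the induction hypothesis applied to $s$ in context $\Delta, x:\typeA$ fixes the result type $\typeC$. One subtlety is that the context is not given for subterms, but it is recovered from $E$: the free variables of $t$ determine $\Gamma$ as a sublist of $E$, and the remaining variables form $\Delta$. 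The types of $x$ and $y$ come from the induction hypothesis on $t$, which gives $t : \typeA \oplus \typeB$.

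For \emph{unique derivation}, the plan is as follows. The last rule is determined by the outermost constructor of $t$. The premisses are then determined once we show that the split of $E$ into $\Gamma$ and $\Delta$ and the shuffle witness are unique. For this I would first prove a lemma: if $\Gamma \vljud t : \typeA$, then the variables of $\Gamma$ are exactly the free variables of $t$. This follows by induction, and the only new point is that in $\rulename{$\oplus_e$}$ both branches share the same $\Delta$, so $\mathrm{FV}(s)\setminus\{x\} = \mathrm{FV}(u)\setminus\{y\} = \Delta$. Given the lemma, $\Gamma$ is the sublist of $E$ on $\mathrm{FV}(t)$ and $\Delta$ is the sublist on the remaining variables. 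Since the order within each is inherited from $E$, the shuffle is unique, and unique typing then fixes the premiss types. I expect this step to be the main, though modest, obstacle. The point to check is that the shared-context $\rulename{$\oplus_e$}$ rule does not break the linearity lemma, which is exactly why both branches must have context $\Delta$ plus a single bound variable.

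\emph{Exchange} and \emph{substitution} go by induction on the derivation, which is legitimate now that derivations are unique. For exchange, the key observation is that every rule whose conclusion has context $E$ has $E$ ranging over shuffles. If $x$ and $y$ lie in different premiss contexts, we swap them within the shuffle. If they lie in the same premiss, we apply the induction hypothesis to that premiss; in $\rulename{$\oplus_e$}$ with $x, y \in \Delta$ we apply it to both branches. The $\rulename{$\oplus_i$}$ rules are immediate. For substitution, we define $(\dintrol{t})[s/x] = \dintrol{t[s/x]}$, and for $\mathtt{case}$ we substitute in the scrutinee or in both branches according to whether $x \in \Gamma$ or $x \in \Delta$. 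The free-variable lemma guarantees exactly one of these holds, with the bound variables renamed apart from $\Delta$. The induction then goes through, using exchange to move $x$ to the end of the relevant premiss context and to rearrange the resulting context into $\Gamma,\Delta$ via a suitable shuffle.
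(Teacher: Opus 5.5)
Your proposal is correct and takes essentially the same approach as the paper, whose proof simply extends the inductive arguments of \cite{dahlqvist22,dahlqvist23} to the three new rules. Your free-variable (linearity) lemma and your handling of the shared branch context $\Delta$ in \rulename{$\oplus_e$} are exactly the points that extension requires. One small remark: you apply the induction hypothesis to premisses after rearranging them with exchange, so state the induction for exchange and substitution on the structure of the term (equivalently, on derivation height, which exchange preserves) rather than on the derivation itself.
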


\begin{proof}
        A straightforward extension of the analogous proof
        in~\cite{dahlqvist22,dahlqvist23} for the original calculus.
\end{proof}

Henceforth we will often typing information in $\lambda$-terms to keep notation
lightweight.

\subsection{The quantalic (in)equational system}
Classical linear $\lambda$-calculus is equipped with a class of equations
(Figure~\ref{fig:eqs})\footnote{
        The listed equations are standard (see for
        example~\cite{crole93,jacobs99,maietti05}) so we refrain from providing further
        explanations about them. Note however that in order to keep notation
        lightweight, our presentation (in Figure~\ref{fig:eqs}) omits the
        underlying context and typing information, which can be reconstructed
        in the usual way.
},
also known as \emph{equations-in-context} $\Gamma \vljud t = s : \typeA$, that
axiomatise \emph{co-Cartesian autonomous categories} syntactically. The purpose
of the current subsection is to extend such a system to the quantalic setting.
\begin{figure*}[h!]
  \begin{subfigure}[large]{0.6\textwidth}
  \begin{tabular}{r c l}
        $\telim{t \otimes s}{u}$ &
        $=$ & $u[t/x,s/y]$ \\
        $\telim{t}{u[x \otimes y / z]}$ &
        $=$ & $u[t/z]$ \\
        $\ielim{t}{u[\ast/ z]}$ & $=$ & $u[t/z]$ \\
        $\ielim{\ast}{t}$ & $=$ & $t$ 
  \end{tabular}
  \caption{Monoidal structure}
  \end{subfigure}
  \vspace{0.5cm}
  \begin{subfigure}{0.3\textwidth}
  \begin{tabular}{r c l}
        $\lambda x.\ t\ s$ &$=$& $t[s/x]$ \\
        $\lambda x. \> (t\ x)$ &$=$& $t$
  \end{tabular}
  \caption{Higher-order structure}
  \end{subfigure}
  \begin{subfigure}{1\textwidth}
  \centering
  \begin{tabular}{r c l}
        $u[\ielim{t}{w}/z]$ &$=$& $\ielim{t}{u[w/z]}$ \\
        $u[\telim{t}{w}/z]$ &$=$& $\telim{t}{u[w/z]}$
  \end{tabular}
  \caption{Commuting conversions}
  \end{subfigure}
  \begin{subfigure}{1\textwidth}
  \vspace{0.25cm}
  
  \dotfill

  \vspace{0.25cm}
  \centering
  \begin{tabular}{r c l}
          $\ccelim{\ddintrol{t}}{s}{u}$
          & $=$
          &
          $s[t/x]$
          \\
          $\ccelim{\ddintror{t}}{s}{u}$
          & $=$
          &
          $u[t/y]$
          \\
          $\ccelim{t}{s[\ddintrol{x}/z]}{s[\ddintror{y}/z]}$
          & $=$
          &
          $s[t/x]$
  \end{tabular}
  \caption{co-Cartesian structure}
  \end{subfigure}
  \caption{Syntactic axiomatics of co-Cartesian autonomous categories.}
  \label{fig:eqs}
\end{figure*}

Let $\Quant$ denote a commutative and unital quantale~\cite{paseka00,stubbe14},
$\otimes : \Quant \times \Quant \to \Quant$ the corresponding binary operation,
and $k$ the unit.  The following definition is crucial for the quantalic
extension to be \emph{finitary} and at the same time \emph{approximately
complete} (more details about this aspect are given later on).
\begin{definition}[\cite{gierz03}]
	Consider a complete lattice $L$.  For every $x, y \in L$ we say that
	$y$ is \emph{way-below} $x$ (in symbols, $y \ll x$) if for every
	subset $X \subseteq L$ whenever $x \leq \bigvee X$ there exists a
	\emph{finite} subset $F \subseteq X$ such that $y \leq \bigvee F$.
	The lattice $L$ is called \emph{continuous} iff for every $x \in L$,
	\begin{flalign*}
		x = \bigvee \{ y  \mid y \in L\ \text{and} \ y \ll x \}
	\end{flalign*}
        A \emph{basis} $B$ of $L$ is a subset
	$B \subseteq L$ such that for every $x \in L$ the set
	$B \cap \{ y \mid y \in L\ \text{and} \ y \ll x \}$ is directed and
	has $x$ as the least upper bound.
\end{definition}

We will assume that the underlying lattice $L$ of $\Quant$ is continuous and
has a basis $B$ closed under finite joins, binary meets, the unit $k$ and the
multiplication $\otimes$ of the quantale.   
\begin{example}
        \label{ex:quant2}
        The Boolean quantale $(L:=(\{0 \leq 1\}, \vee), \otimes := \wedge, k :=
        \top)$ is finite and thus satisfies the conditions
        above~\cite{gierz03}. More generally, every \emph{coherent locale} satisfies
        these conditions~\cite{johnstone82} when the adopted basis is that of
        \emph{compact} elements (those $x \in L$ such that $x \ll x$). For such
        locales, our formulation of approximate completeness (below)
        instantiates to the usual formulation of completeness.    
        Take now the metric quantale (also known as Lawvere quantale)
        $(L:=([0,\infty], \wedge), \otimes := +, k :=0)$. The way-below
        relation corresponds to the strictly greater relation with $\infty >
        \infty$. A basis for the underlying lattice that satisfies the
        conditions above is the set of extended non-negative rational numbers.
        The latter is also a a basis for the \emph{ultrametric} quantale
        $(L:=([0,\infty], \wedge), \otimes := \max, k:=0)$.  Note that the
        underlying order $\leq$ of these last two quantales is the relation
        $\geq_{[0,\infty]}$ rather than $\leq_{[0,\infty]}$. Thus in this
        setting the top element is actually $0$. 
\end{example}

We now present the quantalic extension of Figure~\ref{fig:eqs}. Recall from the
introduction that the idea of quantalic $\lambda$-calculus boils down to
labelling (in)equations by elements of a quantale $\Quant$. And in particular,
a classical equation $t = s$ now abbreviates $t \leq_k s$ together with $s
\leq_k t$.  Thus given a quantale $\Quant$ that satisfies the aforementioned
constraints, the corresponding quantalic system adds to Figure~\ref{fig:eqs}
the laws in Figure~\ref{fig:equations-in-context-cond}.  The first four rules
are essentially a quantalic generalisation of classical (in)equational laws.
The other rules express a form of \emph{quantalic congruence}.  In particular,
the ones below the dotted line are the ones that are new
w.r.t.~\cite{dahlqvist22,dahlqvist23} and dictate how the contructs relative to
disjunctive types interact with the quantalic structure.  Most notably, the
expression $p \otimes (q \wedge r)$ between case statements encodes a form of
\emph{worst-case} assumption: intuitively one takes the `worst' value from
$\{q, r\}$ to reflect the possibility of taking the branch in which the two
respective terms `differ' the most -- such a value then compounds with $p$ to
reflect the `difference' between the tests $t$ and $s$. In the metric setting,
for example, $p \otimes (q \wedge r)$ instantiates to $p + (q \vee r)$. 

Observe that while the original congruence laws~\cite{dahlqvist22,dahlqvist23}
only make use of the quantale's linear structure (\ie\ $\otimes$), the extended
version (Figure~\ref{fig:equations-in-context-cond}) additionally resorts to
the quantale's Cartesian structure (\ie\ infima).  This ties up nicely with the
corresponding categorical semantics, which we detail in the following section.
Note as well that some of the presented laws are redundant -- for example the injection
laws are subsumed by the substitution law. We have decided however to keep this
redundancy to make more explicit how the available term constructs interact
with the quantalic structure. Moreover note that the substitution rule entails
the implication,
\[
        t \leq_q s \, \Longrightarrow \, u[t/x] \leq_q u[s/x]
\]
This formalises the congruence property~\eqref{eq:cong} that was alluded to in
the introduction, with $u$ representing a `program context'. Finally observe
that rule \rulename{join} entails $t \leq_\bot s$ for any two terms $t$ and
$s$. While looking rather trivial, this is essential for our approximate
completeness result.
\begin{remark}
        We omitted the controversial Archimedean
        rule~\cite{dahlqvist22,dahlqvist23,mardare2016quantitative} from our
        system, due to its use of an infinite number of premisses. As already
        mentioned, the only `cost' is that the usual completeness result must
        be weakened to `approximate completeness'.  Note as well that the
        system in Figure~\ref{fig:equations-in-context-cond} does not include a
        symmetry law,
        \[
                \infer[\rulename{sym}]{s \leq_q t}{t \leq_q s}
        \]
        despite being desirable to have it in some cases (\eg\ for
        (ultra)metric reasoning). To accomodate this, we will use the notation
        $t =_q s$ (rather than $t \leq_q s$) to tacitly postulate the symmetry
        law when needed.
\end{remark}

  \begin{figure}[h!]
    \begin{equation*}
            \begin{prooftree}
                    \hypo{}
                    \infer1[\rulename{refl}]{t \leq_k t}
            \end{prooftree}
            \hspace{1cm}
            \begin{prooftree}
                    \hypo{ t \leq_q s }
                    \hypo{ s \leq_r u }
                    \infer2[\rulename{trans}]{t \leq_{q \otimes r} u}
            \end{prooftree}
            \hspace{1cm}
            \begin{prooftree}
                    \hypo{ t \leq_q s }
                    \hypo{ r \leq q}
                    \infer2[\rulename{weak}]{t \leq_r s}
            \end{prooftree}
            \hspace{1cm}
            \begin{prooftree}
                    \hypo{ \forall i \leq n.  \, t \leq_{q_i} s}
                    \infer1[\rulename{join}]{ t \leq_{\vee q_i} s }
            \end{prooftree}
    \end{equation*}
    \vspace{0.3cm}
    \begin{equation*}
            \begin{prooftree}
                    \hypo{t \leq_q s}
                    \hypo{u \leq_r v}
                    \infer2[\rulename{subst}]{ t[u/x] \leq_{q \otimes r} s[v/x] }
            \end{prooftree}
            \hspace{5cm}
            \begin{prooftree}
                    \hypo{ \Gamma \vljud t \leq_q s : \typeA }
                    \hypo{ \Gamma \simeq_\pi \Delta }
                    \infer2[\rulename{perm}]{ \Delta \vljud t \leq_q s : \typeA}
            \end{prooftree}
    \end{equation*}
    \vspace{0.3cm}
    \begin{equation*}
            \begin{prooftree}
                    \hypo{\forall i \leq n. \> t_i \leq_{q_i} s_i}
                    \infer1[]{f(t_1,\dots,t_n) \leq_{\otimes q_i} f(s_1,\dots,s_n)}
            \end{prooftree}
            \hspace{2cm}
            \begin{prooftree}
                    \hypo{ t \leq_q s }
                    \hypo{ u \leq_r v }
                    \infer2[]{ t \otimes u \leq_{q \otimes r} s \otimes v}
            \end{prooftree}
            \hspace{2cm}
            \begin{prooftree}
                    \hypo{ t \leq_q s }
                    \hypo{ u \leq_r v }
                    \infer2[]{\ielim{t}{u} \leq_{q \otimes r} \ielim{s}{v}}
            \end{prooftree}
     \end{equation*}
     \vspace{0.3cm}
     \begin{equation*}  
            \begin{prooftree}
                    \hypo{ t \leq_q s }
                    \hypo{ u \leq_r v }
                    \infer2[]{\telim{t}{u} \leq_{q \otimes r} \telim{s}{v} }
            \end{prooftree}
            \hspace{2cm}
            \begin{prooftree}
                    \hypo{ t \leq_q s }
                    \infer1[]{\lambda x. \, t \leq_q \lambda x. \, s}
            \end{prooftree}
            \hspace{2cm}
            \begin{prooftree}
                    \hypo{ t \leq_q s }
                    \hypo{ u \leq_r v }
                    \infer2[]{t \> u \leq_{q \otimes r} s \> v}
            \end{prooftree}
      \end{equation*}
     \vspace{0.3cm}
     \vspace{0cm}
  
     \dotfill

     \vspace{0cm}
    \begin{equation*}
            \begin{prooftree}
                    \hypo{ t \leq_q s}
                    \infer1[]{\ddintrol{t} \leq_q \ddintrol{s}}
            \end{prooftree}
            \hspace{5cm}
            \begin{prooftree}
                    \hypo{ t \leq_q s}
                    \infer1[]{\ddintror{t} \leq_q \ddintror{s}}
            \end{prooftree}
    \end{equation*}
    \vspace{0.3cm}
    \begin{equation*}
            \begin{prooftree}
                    \hypo{ t \leq_p s }
                    \hypo{ u \leq_q v }
                    \hypo{ w \leq_r o }
                    \infer3[]{\ccelim{t}{u}{w} \leq_{p \otimes (q \wedge r)} \ccelim{s}{v}{o}}
            \end{prooftree}
    \end{equation*}
    \caption{The quantalic (in)equational system.}
    \label{fig:equations-in-context-cond}
    \end{figure}

\begin{example}
        We conclude the section by briefly illustrating quantitative reasoning
        via the quantalic system that was just presented. Specifically we
        involve the notion of extensionality relative to disjunctive types: in
        the classical setting, \emph{strict} equality between two
        (higher-order) terms $t,s : \typeA_1 \oplus \typeA_2 \multimap \typeB$
        is entailed by the two equalities below.
        \[
                \begin{cases}
                        x : \typeA_1 \vljud t \> (\ddintrol{x}) = s \> (\ddintrol{x}) 
                        \\
                        y : \typeA_2 \vljud t \> (\ddintror{y}) = s \> (\ddintror{y})
                \end{cases}
        \]
        In words, this says that equality between $t$ and $s$ (whose input type
        is disjunctive) follows from establishing equality between the two
        terms restricted to inputs of type $\typeA_1$ and then $\typeA_2$.
        Now, there is a quantitative analogue, concretely the quantalic
        inequations,
        \[
                \begin{cases}
                        x : \typeA_1 \vljud t \> (\ddintrol{x}) \leq_q s \> (\ddintrol{x})
                        \\
                        y : \typeA_2 \vljud t \> (\ddintror{y}) \leq_r s \> (\ddintror{y})
                \end{cases}
        \]
        entail $t \leq_{q \wedge r} s$. Such is proved in two stages: first, an
        application of the last rule in
        Figure~\ref{fig:equations-in-context-cond} and an appeal to the
        extensionality law of case statements (in Figure~\ref{fig:eqs}) yields
        the equation $t \> z \leq_{q \wedge r} s \> z$.  Second, congruence
        relative to $\lambda$-abstraction
        (Figure~\ref{fig:equations-in-context-cond}) and the latter's
        extensionality (Figure~\ref{fig:eqs}) yields $t \leq_{q \wedge r} s$.
\end{example}

\section{Categorical semantics}
\label{sec:sem}
\subsection{Interpretation of judgements}
We start by recalling the interpretation of judgements $\Gamma \vljud t:
\typeA$ in a co-Cartesian autonomous category $\catC$. First, for all
$\catC$-objects $X,Y,Z$, $\sw_{X,Y} : X \otimes Y \to Y \otimes X$ denotes the
symmetry morphism, $\lambda_X : I \otimes X \to X$ the left unitor,   and
$\alpha_{X,Y,Z} : X \otimes (Y \otimes Z) \to (X \otimes Y) \otimes Z$ the left
associator. For all $\catC$-morphisms $f : X \otimes Y \to Z$ the morphism
$\overline{f} : X \to (Y \multimap Z)$ denotes the corresponding curried
version (\ie\ the right transpose). The two injections into a binary coproduct
are denoted by $\inl : X \to X + Y$ and $\inr : Y \to X + Y$. Recall that all
co-Cartesian autonomous categories are distributive, \ie\ $[\id \otimes \inl ,
\id \otimes \inr] : X \otimes Y + X \otimes Z \to X \otimes (Y + Z)$ is an
isomorphism. We denote its inverse by $\dist_{X,Y,Z}$.  Given $X_1,\dots,X_n
\in \catC$ we write $X_1 \otimes \dots \otimes X_n$ for the $n$-tensor $(\dots
(X_1 \otimes X_2) \otimes \dots ) \otimes X_n$, and similarly for
$\catC$-morphisms. We will often omit subscripts in components of natural
transformations if no ambiguities arise.

For every ground type $X \in G$ we postulate an interpretation $\sem{X}$
as a $\catC$-object. The remaining types are interpreted as usual: inductively
using the co-Cartesian autonomous structure of $\catC$.  Given a non-empty
context $\Gamma = \Gamma', x : \typeA$, its interpretation is defined by
$\sem{\Gamma', x : \typeA} = \sem{\Gamma'} \otimes \sem{\typeA}$ if $\Gamma'$
is non-empty and $\sem{\Gamma', x : \typeA} = \sem{\typeA}$ otherwise. The
empty context is defined by $\sem{-} = I$.  We will involve some housekeeping
morphisms to handle interactions between context interpretation and the
autonomous structure of $\catC$. Specifically, given $\Gamma_1,\dots,\Gamma_n$
we denote by $\spl_{\Gamma_1; \dots;\Gamma_n} : \sem{\Gamma_1, \dots, \Gamma_n}
\to \sem{\Gamma_1} \otimes \dots \otimes \sem{\Gamma_n}$ the morphism that
splits $\sem{\Gamma_1, \dots, \Gamma_n}$ into $\sem{\Gamma_1} \otimes \dots
\otimes \sem{\Gamma_n}$, and its inverse by $\join_{\Gamma_1;\dots;\Gamma_n}$.
Given $\Gamma, x : \typeA, y : \typeB, \Delta$ we denote by $\exch_{\Gamma,
\underline{x : \typeA, y : \typeB}, \Delta} : \sem{\Gamma, x : \typeA, y :
\typeB, \Delta} \to \sem{\Gamma, y : \typeB, x : \typeA, \Delta}$ the morphism
that permutes $x$ with $y$.  We will also resort to `shuffling' morphisms
$\sh_E : \sem{E} \to \sem{\Gamma_1,\dots,\Gamma_n}$ which, as described in
Section~\ref{sec:calculus}, correspond to restricted permutation sequences. The
full definition of all these housekeeping morphisms is presented
in~\cite{dahlqvist22,dahlqvist23}. 

Finally let us postulate an interpretation $\sem{f} : \sem{\typeA_1} \otimes
\dots \otimes \sem{\typeA_n} \to \sem{\typeA}$ for every operation symbol $f :
\typeA_1,\dots,\typeA_n \to \typeA$ in $\Sigma$. The rules for interpreting
judgements $\Gamma \vljud t : \typeA$ are then given in
Figure~\ref{fig:typing_rules_cond}. Since these are quite standard we omit
explanations, as well as the proofs of the following results.
\begin{figure}[h]
    \begin{equation*}
            \begin{prooftree}
                    \infer0[]{\sem{x : \typeA  \vljud x : \typeA} 
                    = \id}
            \end{prooftree}
            \hspace{2.5cm}
            \begin{prooftree}
                    \hypo{\sem{\Gamma_i \vljud t_i : \typeA} = m_i}
                    \hypo{f \in \typeA_1, \dots , \typeA_n \to \typeA \in \Sigma}
                    \hypo{E \in \Shuff(\Gamma_1; \dots ; \Gamma_n)}
                    \infer3[]{\sem{E \vljud f(t_1,\dots,t_n)} 
                        = \sem{f} \comp (m_1 \otimes \dots \otimes m_n) 
                \comp \spl_{\Gamma_1; \dots ; \Gamma_n} \comp \sh_E}
            \end{prooftree}
    \end{equation*}
    \vspace{0.3cm}
    \begin{equation*}
            \begin{prooftree}
                    \infer0[]{\sem{\vljud \ast : \typeI} = \id}
            \end{prooftree}
            \hspace{3.5cm}
            \begin{prooftree}
                    \hypo{\sem{\Gamma \vljud t : \typeI} = m}
                    \hypo{\sem{\Delta \vljud s : \typeA} = n}
                    \hypo{E \in \Shuff(\Gamma; \Delta)}
                    \infer3[]{
                    \sem{E \vljud \ielim{t}{s} : \typeA} = 
                        n \comp \lambda \comp (m \otimes \id)
                        \comp \spl_{\Gamma;\Delta} \comp \sh_E}
            \end{prooftree}
    \end{equation*}
    \vspace{0.3cm}
    \begin{equation*}
            \begin{prooftree}
                    \hypo{\sem{\Gamma \vljud t : \typeA} = m}
                    \hypo{\sem{\Delta \vljud s : \typeB} = n}
                    \hypo{E \in \Shuff(\Gamma;\Delta)}
                    \infer3[]{\sem{ E \vljud t \otimes s : 
                    \typeA \otimes \typeB} = (m \otimes n) \comp 
                    \spl_{\Gamma;\Delta} \comp \sh_E}
            \end{prooftree}
     \end{equation*}
     \vspace{0.3cm}
     \begin{equation*}
            \begin{prooftree}
                    \hypo{\sem{\Gamma \vljud t : \typeA \otimes \typeB} = m}
                    \hypo{\sem{\Delta, x : \typeA, y : \typeB \vljud s : \typeC} 
                    = n}
                    \hypo{E \in \Shuff(\Gamma; \Delta)}
                    \infer3[]{
                    \sem{E \vljud \telim{t}{s} : \typeA}
                        = n \comp \join_{\Delta;\typeA;\typeB} \comp \alpha
                        \comp \sw \comp (m \otimes \id) \comp \spl_{\Gamma;\Delta}
                        \comp \sh_E }
            \end{prooftree}
    \end{equation*}
    \vspace{0.3cm}
    \begin{equation*}
            \begin{prooftree}
                    \hypo{\sem{\Gamma, x : \typeA \vljud t : \typeB} = m}
                    \infer1[]{
                        \sem{\Gamma \vljud \lambda x. \> t : \typeA \multimap \typeB}
                        = \overline{m \comp \join_{\Gamma;\typeA}}
                        }
            \end{prooftree}
            \hspace{2.5cm}
            \begin{prooftree}
                    \hypo{\sem{\Gamma \vljud t : \typeA \multimap \typeB} =m}
                    \hypo{\sem{\Delta \vljud s : \typeA} = n}
                    \hypo{E \in \Shuff(\Gamma; \Delta)}
                    \infer3[]{
                    \sem{E \vljud t \>  s : \typeB} 
                        = \app \comp (m \otimes n) \comp \spl_{\Gamma;\Delta}
                        \comp \sh_E }
            \end{prooftree}
    \end{equation*}

    \vspace{0.3cm}
    \dotfill
    \vspace{-0.2cm}
    \begin{equation*}
    \begin{prooftree}
        \hypo{\sem{\Gamma \vljud t: \typeA} = m}
        \infer1[]{\sem{\Gamma \vljud \ddintrol{t}: \typeA \oplus \typeB}
                = \inl \comp m}
    \end{prooftree}
    \hspace{4.5cm}
    \begin{prooftree}
        \hypo{\sem{\Gamma \vljud t: \typeB} = m}
        \infer1[]{\sem{\Gamma \vljud \ddintror{t}: \typeA \oplus \typeB}
                = \inr \comp m}
    \end{prooftree} 
    \end{equation*}
    \vspace{0.3cm}
    \begin{equation*}
    \begin{prooftree}
        \hypo{\sem{\Gamma \vljud t: \typeA \oplus \typeB} = m}
        \hypo{\sem{\Delta, x: \typeA \vljud s: \typeC} = n}
        \hypo{\sem{\Delta, y: \typeB \vljud u: \typeC} = o}
        \hypo{E \in \Shuff(\Gamma; \Delta)}
        \infer4[]{\sem{E \vljud \ccelim{t}{s}{u} : \typeC} =
                [n \comp \join_{\Delta;\typeA} \comp \sw,
                o \comp \join_{\Delta;\typeB} \comp \sw] \comp 
                \dist \comp
                (m \otimes \id) \comp \spl_{\Gamma;\Delta} \comp \sh_E}
    \end{prooftree}
    \end{equation*}
    \caption{Interpretation rules of quantalic linear $\lambda$-calculus with 
    additive disjunction.}
    \label{fig:typing_rules_cond}
\end{figure}
\begin{lemma}[Exchange and Substitution]
  \label{lem:exch_subst_inter}
  For all judgements $\Gamma, x : \typeA, y : \typeB, \Delta \vljud t :
  \typeC$, $\> \Gamma, x : \typeA \vljud s : \typeB$, and $\Delta \vljud w :
  \typeA$, the following equations hold in every co-Cartesian autonomous
  category $\catC$.
  \begin{align*}
    \sem{\Gamma, x : \typeA, y : \typeB, \Delta \vljud t : \typeC} & =
    \sem{\Gamma, y : \typeB, x : \typeA, \Delta \vljud t : \typeC}
    \comp \exch_{\Gamma, \underline{x : \typeA, y : \typeB}, \Delta} \\
    \sem{\Gamma, \Delta \vljud s[w / x] : \typeB} & =
    \sem{\Gamma, x : \typeA \vljud s : \typeB} \comp
    \join_{\Gamma;\typeA} \comp\, (\id \otimes \sem{\Delta \vljud w : \typeA})
    \comp \spl_{\Gamma;\Delta}
  \end{align*}
\end{lemma}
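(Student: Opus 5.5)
We prove both equations by induction on the (unique, by Theorem~\ref{thm:properties}) derivation of the judgement: for exchange, induction on $\Gamma, x : \typeA, y : \typeB, \Delta \vljud t : \typeC$; for substitution, induction on $\Gamma, x : \typeA \vljud s : \typeB$. All cases for the constructs of the original calculus (\rulename{hyp}, \rulename{ax}, $\typeI$, $\otimes$, $\multimap$) are literally those of the analogous proofs in~\cite{dahlqvist22,dahlqvist23}. The interpretation rules in Figure~\ref{fig:typing_rules_cond} and the housekeeping morphisms $\spl, \join, \exch, \sh$ are unchanged for these constructs, and the inductive hypotheses are used in the same way. It therefore remains to treat the three rules below the dotted line.

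The injection cases are immediate. For $\ddintrol{t}$, the interpretation is $\inl \comp m$, so both exchange and substitution follow from the inductive hypothesis by post-composing with $\inl$ (and similarly for $\inr$). Substitution commutes with the injections syntactically, \ie\ $(\ddintrol{t})[w/x] = \ddintrol{t[w/x]}$.

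The main case is $\ccelim{t}{s}{u}$, typed by $\Gamma' \vljud t$ and $\Delta', x' \vljud s$, $\Delta', y' \vljud u$, with $E \in \Shuff(\Gamma';\Delta')$. For exchange, consider where the swapped variables $x,y$ live.
\begin{itemize}
\item If both lie in $\Gamma'$ or both in $\Delta'$, we apply the inductive hypothesis to $t$ (respectively to both $s$ and $u$). We then use naturality of $\dist$, $\sw$ and of the copairing to push $\exch$ through the composite.
\item If one lies in $\Gamma'$ and the other in $\Delta'$, the exchanged context is simply another shuffle $E'$ of $\Gamma';\Delta'$. Then we only need the coherence identity $\sh_E = \sh_{E'} \comp \exch$, which holds by coherence for symmetric monoidal categories exactly as in the original proof.
\end{itemize}

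For substitution, there are likewise two sub-cases. If $x$ occurs in $\Gamma'$, the substitution goes into the scrutinee. We use the inductive hypothesis for $t$, then rearrange $(m' \otimes \id)$ past $\spl$, $\sh$ and $\join$ using monoidal coherence. If $x$ occurs in $\Delta'$, it is substituted into both branches. We use the inductive hypothesis for $s$ and $u$ (with $x$ moved to the end via exchange) and then need $[n \comp h, o \comp h']$ to factor compatibly with $(\id \otimes \sem{w})$. This is the step I expect to be the main obstacle: one must show that $\dist$ is natural in the untouched tensor factor, \ie\ $\dist \comp (\id_{X+Y}\otimes g) = ((\id \otimes g) + (\id \otimes g)) \comp \dist$ (up to the $\sw$'s appearing in the rule), and that copairing absorbs the resulting coproduct of morphisms. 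Naturality of $\dist$ follows since its inverse $[\id \otimes \inl, \id\otimes\inr]$ is manifestly natural. Once this is in place, the remaining bookkeeping of $\spl/\join/\sh$ again reduces to symmetric monoidal coherence, as in~\cite{dahlqvist22,dahlqvist23}.
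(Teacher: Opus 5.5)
Your proposal is correct and matches the proof the paper has in mind, which it omits as ``standard'': induction on derivations reusing the cases of \cite{dahlqvist22,dahlqvist23}, with new work only in the $\oplus$ cases, where you correctly identify naturality of $\dist$ in the untouched tensor factor, the law $[f,g]\comp(h+k)=[f\comp h,g\comp k]$, the identity $\sh_E=\sh_{E'}\comp\exch$, and exchange to move the substituted variable to the end of the branch contexts. One small point: you apply the induction hypothesis to exchanged judgements, which are not literally subderivations, so the induction should be on derivation height or term structure (as in the original proofs) rather than purely structural on derivations.
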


\begin{theorem}
  \label{theo:bsound}
  The equations presented in Figure~\ref{fig:eqs} are sound w.r.t.\  judgement
  interpretation (Figure~\ref{fig:typing_rules_cond}). More specifically, if
  $\Gamma \vljud t = s : \typeA$ is one of the equations in
  Figure~\ref{fig:eqs} then $\sem{\Gamma \vljud t : \typeA} = \sem{\Gamma
  \vljud s : \typeA}$.
\end{theorem}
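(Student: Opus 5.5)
The proof goes equation by equation through Figure~\ref{fig:eqs}. For each equation $\Gamma \vljud t = s : \typeA$ we unfold both sides with the interpretation rules of Figure~\ref{fig:typing_rules_cond}. We then reduce the left-hand side to the right-hand side using Lemma~\ref{lem:exch_subst_inter}, the coherence of the symmetric monoidal structure, and the universal properties available in $\catC$.

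The equations above the dotted line (monoidal, higher-order, commuting conversions for $\typeI$ and $\otimes$) are exactly those of~\cite{dahlqvist22,dahlqvist23}. The interpretation rules for the constructs involved are unchanged, so the soundness proofs given there carry over verbatim. The only thing to check is that those proofs never use an induction over all terms that would now need new cases. They do use the substitution lemma, but that is supplied for the extended language by Lemma~\ref{lem:exch_subst_inter}. The real work is therefore the three co-Cartesian equations.

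\textbf{The $\beta$-equations.} Take the $\mathtt{inl}$ case. Write $m$ for $\sem{t}$, so that $\sem{\ddintrol{t}} = \inl\comp m$. The left-hand side becomes
\[
[n\comp\join\comp\sw,\ o\comp\join\comp\sw]\comp\dist\comp((\inl\comp m)\otimes\id)\comp\spl\comp\sh_E .
\]
By naturality of $\sw$ we rewrite this with $\id\otimes\inl$ instead of $\inl\otimes\id$, as $\dist$ is stated with the injection on the right factor. We then use $\dist\comp(\id\otimes\inl)=\inl$, which holds because $\dist$ is the inverse of $[\id\otimes\inl,\id\otimes\inr]$. Applying the copairing leaves $n\comp\join_{\Delta;\typeA}\comp\sw\comp(m\otimes\id)\comp\spl_{\Gamma;\Delta}\comp\sh_E$. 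By the substitution clause of Lemma~\ref{lem:exch_subst_inter}, $\sem{s[t/x]}$ is $n\comp\join\comp(\id\otimes m)\comp\spl_{\Delta;\Gamma}$ on context $\Delta,\Gamma$. The judgement $E\vljud s[t/x]$ lives on the shuffle $E$, so we also need the exchange clause to move from $\Delta,\Gamma$ to $E$. The two expressions then agree by naturality of $\sw$ and a coherence identity between $\sw$, $\spl$, $\sh_E$ and the exchange morphisms. The $\mathtt{inr}$ case is symmetric.

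\textbf{The $\eta$-equation.} Write $k=\sem{s}$, with $z:\typeA\oplus\typeB$ in context $\Delta,z$. By substitution, the two branches $s[\ddintrol{x}/z]$ and $s[\ddintror{y}/z]$ are interpreted as $k\comp\join\comp(\id\otimes\inl)$ and $k\comp\join\comp(\id\otimes\inr)$, each on context $\Delta,x$ or $\Delta,y$. Composing with $\join\comp\sw$ and pulling $k\comp\join$ out of the copairing, the left-hand side becomes
\[
k\comp\join\comp[(\id\otimes\inl)\comp\sw,(\id\otimes\inr)\comp\sw]\comp\dist\comp(m\otimes\id)\comp\spl\comp\sh_E .
\]
Using naturality of $\sw$ and that $\sw$ commutes appropriately with the copairing, the middle part becomes $[\id\otimes\inl,\id\otimes\inr]\comp\dist$ conjugated by swaps, and this equals the identity. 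What remains is $k\comp\join\comp(\id\otimes m)\comp\sw\comp\spl\comp\sh_E$, which is $\sem{s[t/z]}$ by Lemma~\ref{lem:exch_subst_inter}.

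I expect the main obstacle to be the context bookkeeping. The interpretation of case puts $\Gamma$ first and then swaps, whereas substitution yields context $\Delta,\Gamma$ rather than the shuffle $E$. Reconciling $\sw$, $\spl$, $\join$, $\sh_E$ and $\exch$ needs a coherence lemma: any two composites of these housekeeping morphisms with the same source and target, inducing the same permutation of variables, are equal. This follows from Mac Lane coherence for symmetric monoidal categories. I would state it once and invoke it throughout, leaving the computation itself to naturality and the definition of $\dist$.
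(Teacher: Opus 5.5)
Your proposal is correct and follows the route the paper has in mind: the paper omits this proof as standard, with the multiplicative equations inherited from \cite{dahlqvist22,dahlqvist23} and the three case equations left to the coproduct universal property, invertibility of $\dist$, Lemma~\ref{lem:exch_subst_inter} and symmetric monoidal coherence, exactly as you use them. Your swap-based rewriting also patches a small inaccuracy in the paper: its $\dist$ is defined on $X\otimes(Y+Z)$, but the case clause applies it to $(\sem{\typeA}+\sem{\typeB})\otimes\sem{\Delta}$.
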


\subsection{Interpretation of $\Quant$-(in)equations}
Let us now move to  the intepretation of $\Quant$-(in)equations. Essentially,
the main idea is to move from \emph{ordinary} co-Cartesian autonomous
categories to suitably enriched ones. We thus start by describing a certain
basis of enrichment, that of `generalised metric
spaces'~\cite{hofmann14,stubbe14}.
\begin{definition}
  \label{defn:vcat}
  A $\Quant$-category is a pair $(X,a)$ where $X$ is a set and $a : X \times X
  \to \Quant$ is a function that satisfies
  \begin{flalign*}
    k \leq a(x,x) \qquad \text{ and }  \qquad
    a(x,y) \otimes a(y,z) \leq a(x,z) \hspace{2cm}
    (x,y,z \in X)
  \end{flalign*}
  For two $\Quant$-categories $(X,a)$ and $(Y,b)$, a $\Quant$-functor $f :
  (X,a) \to (Y,b)$ is a function $f : X \to Y$ that satisfies 
  $a(x,y) \leq b(f(x),f(y))$ for all $x,y \in X$.
\end{definition}
$\Quant$-categories and $\Quant$-functors form a category which we denote by
$\VCat$.  A $\Quant$-category $(X,a)$ is called \emph{symmetric} if $a(x,y) =
a(y,x)$ for all $x,y \in X$. We denote by $\VCatSy$ the full subcategory of
$\VCat$ whose objects are symmetric. Also, every $\Quant$-category carries a
natural order defined by $x \leq y$ whenever $k \leq a(x,y)$. A
$\Quant$-category is called \emph{separated} if its natural order is
anti-symmetric. We denote by $\VCatSe$ the full subcategory of $\VCat$ whose
objects are separated, and by $\VCatSS$ the full subcategory of $\VCat$ whose
objects are both symmetric and separated. Note as well the following reflections,
\[
        \VCatSe \adjuncte{\mathsf{sep}}{} \VCat 
        \hspace{1cm} 
        \VCatSS
        \adjuncte{\mathsf{sep}}{} 
        \VCatSy
\]
Specifically the left adjoint the $\mathsf{sep} : \VCat \to \VCatSe$ (and
respective restriction to symmetric categories) is built in two
stages~\cite{hofmann14}. Given a $\Quant$-category $(X,a)$, one defines the
equivalence relation $x \sim y$ whenever $x \leq y$ and $y \leq x$ (where
$\leq$ is the natural order introduced earlier). This relation induces the
\emph{separated} $\Quant$-category $(X/_\sim, \tilde a)$, where $\tilde a$ is
defined by $\tilde a([x],[y]) = a(x,y)$ for all $[x],[y] \in X/_\sim$. 
\begin{example}
  \label{ex:quant}
  For $\Quant$ the Boolean quantale, $\VCat$ is the category of preordered sets
  and monotone maps. $\VCatSe$ is the full subcategory of partially ordered
  sets and $\VCatSS$ is the category of sets and functions.  For
  $\Quant$ the metric quantale, $\VCat$ is the category of hemimetric spaces
  and non-expansive maps~\cite{fernandez23}. $\VCatSy$ is the full subcategory
  of pseudometric spaces, $\VCatSe$ the full subcategory of quasimetric
  spaces~\cite{cobzas12}, and finally $\VCatSS$ is the full subcategory $\Met$ of
  metric spaces. An analogous reasoning applies to the ultrametric quantale.  
\end{example}
The categories $\VCat$, $\VCatSy$, $\VCatSe$, and $\VCatSS$ are
autonomous~\cite{hofmann14,day72}, and therefore are a suitable basis of
enrichment~\cite{kelly82}.  If the underlying quantale is integral (\ie\ if the
top element is $k$), the autonomous structure of all these categories is
particularly easy to describe.  The tensor is defined as $(X,a) \otimes (Y,b)
:= (X \times Y, a \otimes b)$ where $a\otimes b$ is given by,
\begin{flalign*}
  (a \otimes b)((x,y), (x',y')) = a(x,x') \otimes b(y,y')
\end{flalign*}
and the set of $\Quant$-functors between $(X,a)$ and $(Y,b)$ comes equipped
with the map,
\[
(f,g) \mapsto \bigwedge_{x \in X} b(f(x),g(x))
\]
\begin{example}
All examples in Example~\ref{ex:quant} are based on an integral
quantale.
\end{example} 
Note that for any quantale $\Quant$ these four categories also have binary
products, defined as $(X,a) \times (Y,b) := (X \times Y, a \wedge b)$ where,
\begin{flalign*}
  (a \wedge b)((x,y), (x',y')) = a(x,x') \wedge b(y,y')
\end{flalign*}
They also have binary coproducts, defined as $(X,a) + (Y,b) := (X + Y, c)$
where $c$ is the smallest function that makes the injections $\inl$ and $\inr$
$\VCat$-functors.  Unlike in~\cite{dahlqvist22,dahlqvist23} this (co-)Cartesian
structure plays a key rôle in the present paper, as detailed later on. 

\begin{definition}\label{def:VCatEnriched}
  A category $\catC$ is $\VCat$-enriched (or simply, a $\VCat$-category) if for
  all $\catC$-objects $X$ and $Y$ the hom-set $\catC(X,Y)$ is a
  \emph{$\Quant$-category} and if the composition of $\catC$-morphisms is a
  $\Quant$-functor,
  \begin{flalign*}
    (\ \cdot\ ) : \catC(X,Y) \otimes \catC(Y,Z)
    \longrightarrow \catC(X,Z)
  \end{flalign*}
  Given two $\VCat$-categories $\catC$ and $\catD$ and a functor $\funF : \catC
  \to \catD$, we call $\funF$ a $\VCat$-enriched functor (or simply,
  $\VCat$-functor) if for all $\catC$-objects $X$ and $Y$ the map $\funF_{X,Y}
  : \catC(X,Y) \to \catD(\funF X, \funF, Y)$ is a $\Quant$-functor.  An
  adjunction $\catC : \funF \dashv \funG : \catD$ is called $\VCat$-enriched if
  for all objects $X \in \catC$ and $Y \in \catD$ there exists a
  $\Quant$-isomorphism $\catD(\funF X, Y) \cong \catC(X, \funG Y)$ natural in
  $X$ and $Y$.  We obtain analogous notions of enrichment by substituting
  $\VCat$ with $\VCatSe$, $\VCatSy$, or $\VCatSS$.
\end{definition}

\begin{definition}\label{defn:enr_aut}
  A $\VCat$-enriched autonomous category $\catC$ is an autonomous and
  $\VCat$-enriched category $\catC$ such that the bifunctor $\otimes : \catC
  \times \catC \to \catC$ is a $\VCat$-functor and the adjunction $(- \otimes
  X) \dashv (X \multimap -)$ is a $\VCat$-adjunction. Again we obtain analogous
  notions of an enriched autonomous category by replacing $\VCat$ (as basis of
  enrichment) with $\VCatSe$, $\VCatSy$, or $\VCatSS$.
\end{definition}
The previous definition presents an enrichment of a category's autonomous
structure -- and, in the new setting of additive disjunction, we ought to  
work with an enrichment of coproducts as well. However instead of enriching the
latter over the $\otimes$ structure of $\VCat$ we enrich over the Cartesian
structure ($\times$) instead. This may look \emph{ad-hoc}, but from a logical
perspective it is the natural approach: while the multiplicative (\ie\
$\otimes$) structure of $\lambda$-calculus is handled via an enrichment over
the $\otimes$ structure of $\VCat$, the additive (\ie\ Cartesian) structure is
handled via an enrichment over the Cartesian structure of $\VCat$. More
formally our enrichment of coproducts amounts to imposing that the co-pairing
map is a $\Quant$-functor,
\begin{equation}
        \label{eq:coprod}
        [-,=] : \catC(X,Y) \times \catC(Z,Y) \to \catC(X + Z, Y)
\end{equation}
\begin{definition}
        \label{defn:auto_coprod}
        A $\VCat$-co-Cartesian autonomous category is a $\VCat$-autonomous
        category with binary coproducts enriched over the product structure of
        $\VCat$ (\ie\ we have the $\Quant$-functor~\eqref{eq:coprod}). Again we
        obtain analogous notions of enriched co-Cartesian autonomous category
        by replacing $\VCat$ (as basis of enrichment) with $\VCatSe$,
        $\VCatSy$, or $\VCatSS$.
\end{definition}

Recall the interpretation of $\lambda$-terms (from
Figure~\ref{fig:typing_rules_cond}) and let $\catC$ be a $\VCat$-co-Cartesian
autonomous category. We say that a $\Quant$-inequation $\Gamma \vljud t \leq_q
s : \typeA$ is \emph{satisfied} by this interpretation if $q \leq a(\sem{\Gamma
\vljud t : \typeA},\sem{\Gamma \vljud s : \typeA})$ where $a :
\catC(\sem{\Gamma},\sem{\typeA}) \times \catC(\sem{\Gamma},\sem{\typeA}) \to
\Quant$ is the underlying function of the $\Quant$-category
$\catC(\sem{\Gamma},\sem{\typeA})$. This notion of satisfaction and the
previous definition also give way to the following notions of theory and model.

\begin{definition}[$\Quant\lambda$-theory]\label{defn:theory}
  Consider a tuple $(G,\Sigma)$ consisting of a set $G$ of ground types and a
  set $\Sigma$ of sorted operation symbols.  A \emph{$\Quant \lambda$-theory}
  $((G,\Sigma),Ax)$ is a triple such that $Ax$ is a set of
  $\Quant$-inequations-in-context between $\lambda$-terms built from
  $(G,\Sigma)$.
\end{definition}

As usual, the elements of $Ax$ are called the \emph{axioms} of the theory. In
this setting, a theorem is a $\Quant$-inequation that is provable from the laws
listed in Figure~\ref{fig:eqs} and Figure~\ref{fig:equations-in-context-cond}.
If the aforementioned symmetry law is imposed then we speak of a symmetric
$\Quant\lambda$-theory.
  
\begin{definition}[Models of (symmetric) $\Quant \lambda$-theories]\label{defn:model}
        Let us consider a theory $\mathscr{T}$ and a $\VCatSe$-co-Cartesian
        autonomous category $\catC$. Suppose that for each $X \in G$ we have an
        interpretation $\sem{X}$ that is a $\catC$-object and analogously for
        the operation symbols. This interpretation structure is a \emph{model}
        of the theory if all axioms are satisfied by the interpretation. Models
        of symmetric theories follow the same reasoning except that one
        replaces the basis of enrichment $\VCatSe$ by $\VCatSS$.
\end{definition}

\begin{remark}
        While we do not pursue such a quest here, note that the imposition of
        separability in the previous definition is crucial for making the
        semantics of our extended calculus functorial (\ie\ for framing our
        semantics as a certain kind of functor instead of directly using the
        rules in Figure~\ref{fig:typing_rules_cond}). In fact, references
        \cite{dahlqvist22,dahlqvist23} show that the original calculus already
        requires separability.
\end{remark}

\subsection{Soundness and completeness}

Let us now establish soundness and approximate completeness.

\begin{theorem}[Soundness]\label{theo:sound}
  Consider a (symmetric) $\Quant\lambda$-theory $\mathscr{T}$ and a
  corresponding model. If we have $\Gamma \vljud t \leq_q s : \typeA$ as a
  theorem of $\mathscr{T}$ then $q \leq a(\sem{\Gamma \vljud t :
  \typeA},\sem{\Gamma \vljud s : \typeA})$.
\end{theorem}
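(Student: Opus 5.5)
The proof goes by induction on the derivation of the theorem $\Gamma \vljud t \leq_q s : \typeA$, showing for each rule in Figure~\ref{fig:eqs} and Figure~\ref{fig:equations-in-context-cond} that satisfaction of the premisses in the model entails satisfaction of the conclusion. Axioms of $\mathscr{T}$ are satisfied by definition of a model. For the classical equations of Figure~\ref{fig:eqs}, read as $t \leq_k s$ and $s \leq_k t$, Theorem~\ref{theo:bsound} gives $\sem{t} = \sem{s}$, and reflexivity of the hom $\Quant$-category gives $k \leq a(\sem{t},\sem{t})$.

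The structural rules follow from the $\Quant$-category axioms and the lattice structure of $\Quant$.
\begin{itemize}
\item \rulename{refl} is $k \leq a(f,f)$.
\item \rulename{trans} follows from $a(f,g)\otimes a(g,h)\leq a(f,h)$ together with monotonicity of $\otimes$.
\item \rulename{weak} is immediate.
\item \rulename{join} holds because $a(f,g)$ is an upper bound of all the $q_i$.
\item \rulename{perm} uses the Exchange part of Lemma~\ref{lem:exch_subst_inter}: the two interpretations differ by precomposition with a permutation isomorphism, and precomposition is a $\Quant$-functor (composition with a fixed morphism $h$ is a $\Quant$-functor since $k \leq a(h,h)$).
\end{itemize}
For the symmetric case, the hom-objects lie in $\VCatSS$, so $a$ is symmetric and \rulename{sym} is sound.

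The multiplicative congruence rules reuse the argument of~\cite{dahlqvist22,dahlqvist23}. Each interpretation in Figure~\ref{fig:typing_rules_cond} is assembled from the interpretations of subterms using three operations, each of which is a $\Quant$-functor out of a $\otimes$-product of hom-objects:
\begin{itemize}
\item composition, by $\VCat$-enrichment;
\item $\otimes$ on morphisms, since $\otimes$ is a $\VCat$-functor;
\item currying $\overline{(-)}$, since the adjunction is a $\VCat$-adjunction.
\end{itemize}
The fixed housekeeping morphisms ($\spl$, $\sh$, $\join$, $\app$, $\lambda$, $\alpha$, $\sw$) are handled by the precomposition remark above. The bound $q\otimes r$ then comes out of the chain of inequalities. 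The \rulename{subst} rule uses the Substitution part of Lemma~\ref{lem:exch_subst_inter} in the same way, and the injection rules are postcomposition with $\inl$ or $\inr$.

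The main obstacle is the new case rule with bound $p\otimes(q\wedge r)$. Suppose $m,m'$ interpret $t,s$; $n,n'$ interpret $u,v$; and $o,o'$ interpret $w,o$. The assumptions are $p\leq a(m,m')$, $q\leq a(n,n')$ and $r\leq a(o,o')$. The argument has three steps.
\begin{enumerate}
\item Precomposition with $\join\comp\sw$ is a $\Quant$-functor, so $q\wedge r \leq a(n\comp\join\comp\sw, n'\comp\join\comp\sw)\wedge a(o\comp\join\comp\sw,o'\comp\join\comp\sw)$.
\item By the product enrichment~\eqref{eq:coprod}, the right-hand side is bounded above by $a([n\comp\join\comp\sw, o\comp\join\comp\sw],[n'\comp\join\comp\sw, o'\comp\join\comp\sw])$. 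This is the key point: it is where the Cartesian structure of $\VCat$ (the meet $a\wedge b$ on products) enters.
\item Write $g,g'$ for these two copairings. Use functoriality of $\otimes$ to get $p \otimes k\leq a(m\otimes\id, m'\otimes\id)$, then precomposition with $\spl\comp\sh$. The fixed middle morphism $\dist$ is handled by composition, and composition gives $a(m\otimes\id, m'\otimes\id)\otimes a(g,g')\leq a(g\comp\dist\comp(m\otimes\id), g'\comp\dist\comp(m'\otimes\id))$.
\end{enumerate}
Combining these with monotonicity of $\otimes$ yields the bound $p\otimes(q\wedge r)$. The only care needed is that $\dist$ is a fixed isomorphism, so it contributes no quantale factor.
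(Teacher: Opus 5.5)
Your proposal is correct and follows the same route as the paper, which simply states that soundness follows by induction over derivation trees. Your case analysis fills in what that induction leaves implicit, in particular the use of the product enrichment \eqref{eq:coprod} of copairing to obtain the $p \otimes (q \wedge r)$ bound for the case rule.
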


\begin{proof}
  Follows straightforwardly by induction over proof derivation trees.
\end{proof}
The case of approximate completeness is less straightforward: among other
things, due to the presence of additive disjunction, we will need that for
every $q \in \Quant$ the operation,
\begin{align}
        \label{op:dist}
        q \wedge (-) : \Quant \to \Quant
\end{align}
preserves directed suprema. Fortunately such is indeed the case when $\Quant$
is continuous~\cite[Proposition I-1.8]{gierz03}.  In detail,

\begin{theorem}[Approximate completeness]
        Consider a (symmetric) $\Quant \lambda$-theory $\mathscr{T}$. If a given
        inequation $\Gamma \vljud t \leq_q s : \typeA$ holds in all models of
        $\mathscr{T}$ then for all \emph{approximations} $r \ll q$ ($r \in B$) we have
        $\Gamma \vljud t \leq_r s : \typeA$ as a theorem of $\mathscr{T}$. In
        particular, if $q$ is compact (\ie\ $q \ll q$) we have $\Gamma \vljud t
        \leq_q s : \typeA$ as a theorem of $\mathscr{T}$.
\end{theorem}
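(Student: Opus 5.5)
The plan is to build a syntactic (term) model of $\mathscr{T}$ and show that it is a $\VCatSe$-co-Cartesian autonomous category (or a $\VCatSS$-one in the symmetric case). Satisfaction in this model will say, essentially, which inequations are provable up to approximation.

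\textbf{Construction of the term model.} Objects are types. Morphisms $\typeA \to \typeB$ are terms $x : \typeA \vljud t : \typeB$ modulo the equivalence generated by provable $t \leq_k s$ and $s \leq_k t$. The hom-sets carry the map
\[
a(t,s) = \bigvee \{ r \in B \mid x : \typeA \vljud t \leq_r s : \typeB \text{ is a theorem} \}.
\]
Composition is substitution, and the co-Cartesian autonomous structure is given by the term constructs in the usual way. The unit $\typeI$ handles the empty context, and contexts are collapsed into tensors via $\mathtt{pm}$; (\textbf{perm}) and Theorem~\ref{thm:properties} let us move freely between contexts and single variables.

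\textbf{Checking the structure.}
\begin{enumerate}
\item The equations of Figure~\ref{fig:eqs} give the categorical laws, exactly as in the classical case.
\item For the $\Quant$-category axioms: $k \leq a(t,t)$ follows because $k$ lies in the basis $B$ and (\textbf{refl}) holds. The inequality $a(t,s)\otimes a(s,u) \leq a(t,u)$ follows from (\textbf{trans}), closure of $B$ under $\otimes$, and distributivity of $\otimes$ over joins.
\item Separation holds by construction, because the quotient identifies exactly the $\leq_k$-equivalent terms.
\item The enrichment of $\otimes$, of composition, and of currying follows from the congruence rules and (\textbf{subst}). Each of these only needs the inequality on the generators $r\in B$, which then extends along joins.
\item The enrichment of copairing \eqref{eq:coprod} is the genuinely new step. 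Given $a(s,v)=\bigvee R_1$ and $a(u,o)=\bigvee R_2$, we must show $a(s,v)\wedge a(u,o) \leq a([s,u],[v,o])$. By continuity, $q\wedge(-)$ preserves directed joins, as noted in \eqref{op:dist}. The sets $R_i$ are directed, since they are closed under finite joins by (\textbf{join}) and closure of $B$; here (\textbf{join}) also covers the empty case $\bot$. Hence $\bigvee R_1 \wedge \bigvee R_2 = \bigvee_{r_1,r_2} r_1\wedge r_2$. For each pair, the case congruence rule with $p=k$ on the scrutinee gives $[s,u] \leq_{r_1\wedge r_2} [v,o]$, and $r_1\wedge r_2\in B$ by closure under binary meets.
\end{enumerate}
In the symmetric case, symmetry of $a$ follows from the symmetry law.

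\textbf{Concluding the argument.} Interpreting ground types and operation symbols as themselves, every axiom $t\leq_q s$ in $Ax$ is satisfied. Indeed, for each basis element $r \ll q$ we get $t \leq_r s$ by (\textbf{weak}), and $q = \bigvee\{r\in B\mid r\ll q\}$. So this is a model. Next we show by induction on terms that $\sem{\Gamma\vljud t}$ is the class of $t$ itself, transported along $\spl$/$\join$, using Lemma~\ref{lem:exch_subst_inter}.

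Now suppose $\Gamma \vljud t\leq_q s$ holds in all models. Then $q \leq a(t,s) = \bigvee R$ with $R$ directed. Given $r\ll q$, the definition of the way-below relation yields a finite $F\subseteq R$ with $r\leq \bigvee F$. By (\textbf{join}) we have $t\leq_{\bigvee F} s$, and (\textbf{weak}) then gives $t\leq_r s$. The compact case $q \ll q$ is immediate.

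\textbf{Main obstacle.} I expect the main difficulty to be the copairing enrichment, specifically the interchange of $\wedge$ with the directed joins. This is precisely why continuity and closure of $B$ under meets are required. A secondary nuisance is the bookkeeping needed to pass between multi-variable contexts and single-variable hom-sets while keeping the distances unchanged. I would handle this by showing that $\mathtt{pm}$-abstraction and its inverse are distance-preserving, using congruence in both directions.
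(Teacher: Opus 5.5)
Your overall route is the paper's: a term category whose hom-distances are joins of provable labels, copairing enrichment via meet-continuity of $q \wedge (-)$ over the directed sets supplied by (\textbf{join}), and a final extraction of a finite $F$ with $r \leq \bigvee F$. Restricting the joins to $r \in B$ is a harmless variant; it only spends the closure hypotheses on $B$ where the paper's unrestricted join gets these facts for free.

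One step fails as stated, however: the claim that separation ``holds by construction'' because you quotient by provable $\leq_k$-equivalence. The natural order of a hom-object identifies $t$ and $s$ as soon as $k \leq a(t,s)$ and $k \leq a(s,t)$. But $k \leq a(t,s) = \bigvee R$ does not yield a derivation of $t \leq_k s$. As in your own concluding step, a directed join only lets you extract labels way-below it, $k$ need not be compact, and there is no Archimedean rule.

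For example, over the Lawvere quantale take the axioms $x : \typeI \vljud c(x) =_{1/n} d(x) : X$ for all $n \geq 1$. Then $a(c(x),d(x)) = a(d(x),c(x)) = 0 = k$, yet $c(x) \leq_0 d(x)$ is not a theorem. Any derivation uses only the axioms with $n \leq N$ for some $N$. Interpreting $c$ and $d$ as the two points of a metric space at distance $1/N$ gives a model of that finite subtheory in which the inequation fails, so soundness rules out the derivation.

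Hence your quotient need not be separated. It is then not $\VCatSe$-enriched, not a model in the sense of Definition~\ref{defn:model}, and the hypothesis ``holds in all models'' cannot be applied to it.

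The repair is what the paper does: quotient by $t \sim s$ iff $k \leq a(t,s)$ and $k \leq a(s,t)$, i.e.\ apply $\mathsf{sep}$ (or its symmetric restriction in the symmetric case). Your items 4 and 5 already show that composition, tensor, currying and copairing are $\Quant$-functors, so they respect $\sim$ and descend to the quotient. Provable equality implies $\sim$, so the categorical laws survive. Finally, $\tilde a([t],[s]) = a(t,s)$, so your way-below argument at the end goes through unchanged.
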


\begin{proof}[Proof]
  The proof is based on the idea of a Lindenbaum-Tarski algebra.  Specifically
  it follows from,
  \begin{enumerate}
          \item building a $\VCatSe$-enriched `syntactic' category $\Syn(\mathscr{T})$;
          \item showing that it is a model of $\mathscr{T}$;
          \item and showing that if $q \leq a(\sem{t},\sem{s})$ in
                  $\Syn(\Theo)$ then for all approximations $r \ll q$ $(r \in B)$ we have
                  $\Gamma \vljud t \leq_r s : \typeA$ as a theorem of $\mathscr{T}$.
  \end{enumerate}
  We overview each step. Step (1) was actually already described
  in~\cite{dahlqvist22,dahlqvist23}.  The main idea is as follows: given two
  types $\typeA$ and $\typeB$ we take set of judgments of the form $x
  : \typeA \vljud t : \typeB$ and equip it with the structure of a
  $\Quant$-category by setting,
  \begin{equation}
          \label{eq:compl}
        a \big (x: \typeA \vljud t : \typeB ,y : \typeA \vljud s : \typeB \big ) 
        = \bigvee \big \{ q \mid t[z/x] \leq_q s[z/y] \text{ a theorem of } \Theo \big \}
  \end{equation}
  We then quotient this structure into a \emph{separated} $\Quant$-category via
  the previous left adjoint $\VCat \to \VCatSe$ (or its restriction $\VCatSy
  \to \VCatSS$ if $\mathscr{T}$ is symmetric). Such a process forms the
  hom-objects of $\Syn(\Theo)$, whose objects are the types in $\mathscr{T}$
  and composition is as usual substitution.  Note that the set from which we
  take the supremum \eqref{eq:compl} is directed, thanks to
  Rule~\rulename{join} from Figure~\ref{fig:equations-in-context-cond}.

  Concerning Step (2), \cite{dahlqvist22,dahlqvist23} already show that
  $\Syn(\Theo)$ is $\VCatSe$-autonomous, so what is left to prove is that
  $\Syn(\Theo)$ is $\VCatSe$-co-Cartesian as well (recall
  Definition~\ref{defn:model}). Thus consider two terms $a : \typeA \vljud t :
  \typeC$ and $b : \typeB \vljud s : \typeC$. We define their co-pairing  by,
  \[
          z : \typeA \oplus \typeB \vljud
          \ccelim{z}{t[x/a]}{s[y/b]} : \typeC
  \]
  The fact that this co-pairing construction is congruent follows from meets
  being distributive over directed joins (recall Operation~\eqref{op:dist}).
  In fact, the same property also entails that the resulting coproducts are
  enriched over the Cartesian structure of $\VCatSe$ (resp. $\VCatSS$ if $\mathscr{T}$
  is symmetric). Let us prove this last case in detail (the former one
  is analogous so we do not detail it here). We need to prove that,
  \[
          a(t,t') \wedge a(s,s') \leq a([t,s], [t',s'])
  \]
  for all $\lambda$-terms $a : \typeA \vljud t : \typeC$, $a' : \typeA \vljud
  t' : \typeC$, $b : \typeB \vljud s : \typeC$, $b' : \typeB \vljud s' :
  \typeC$.  By virtue of the previous distributivity property and the
  $\Quant$-congruence rules (from Figure~\ref{fig:equations-in-context-cond}),
  we obtain the sequence of (in)equalities below, which proves our claim.
  \begin{align*}
          a(t,t') \, \wedge \, a(s,s') & = 
          \bigvee \big \{ q \mid t[z/a] \leq_q t'[z/a'] 
                  \text{ a theorem of } \mathscr{T}
                  \big \}
          \, \wedge \,
          \bigvee \big \{ r \mid s[w/b] \leq_r s'[w/b'] \text{ a theorem of } \mathscr{T}
                \big \}
          \\
          &
          = \bigvee \big \{ q \wedge r \mid t[z/a] \leq_q t'[z/a'] 
                  \text{ a theorem of } \mathscr{T} \text{ and }
                  s[w/b] \leq_r s'[w/b'] \text{ a theorem of } \mathscr{T}
                \big \}
          \\
          &
          \leq \bigvee \big \{
                  q \mid [t,s] \leq_q [t',s'] \text{ a theorem of } \mathscr{T}
          \big \}
          \\
          &
          = a([t,s],[t',s'])
  \end{align*}
  Finally Step (3) boils down to showing that if an inequation $x : \typeA
  \vljud t \leq_q s : \typeB$ is satisfied by $\Syn(\mathscr{T})$ then for all
  approximations $r \ll q$ $(r \in B)$ the inequation $x : \typeA  \vljud t
  \leq_r s : \typeB$ is a theorem of $\Theo$. So by assumption $q \leq
  a([t],[s]) = a(t,s) = \bigvee \{ p \mid t \leq_p s \text{ a theorem of }
  \mathscr{T}\}$. 
  It follows from the definition of the way-below relation that there exists a
  \emph{finite} set $F \subseteq \{ p \mid t \leq_p s \text{ a theorem of }
  \mathscr{T} \}$ such that $r \leq \bigvee F$. Thus by an application of
  Rule~\rulename{join} in Figure~\ref{fig:equations-in-context-cond} we obtain
  $t \leq_{\bigvee F} s$ and consequently Rule~\rulename{weak} yields $t \leq_r
  s$ as a theorem of $\Theo$.
\end{proof}

\section{A fistful of models: from gluing to enriched presheaves}
\label{sec:models}
\subsection{Generalised metric spaces} 
We now present a series of models of our calculus (recall
Definition~\ref{defn:auto_coprod}), starting with the category $\VCat$ of
`generalised metric spaces' and the aforementioned full subcategories. First,
the category $\VCat$ is $\VCat$-autonomous and also (co)complete (see
\eg~\cite{hofmann14,dahlqvist22,dahlqvist23} and the previous section). Its
(co)limits are in fact $\VCat$-enriched, in the sense that there exist
$\VCat$-isomorphisms,
\begin{align*}
        \label{eq:enriched}
        \VCat(X, \lim \Diag) \cong \lim \VCat(X, \Diag) 
        \hspace{1cm}
        \VCat(\colim \Diag, X) \cong \lim \VCat(\Diag, X)
\end{align*}
and not just bijections, as in ordinary category theory (note that the second
case, in particular, subsumes our requirement of binary coproducts being
enriched over the Cartesian structure of $\VCat$). The proof of these two cases
is as follows: while the case of limits follows directly from exponentials
preserving limits, the case of colimits is entailed by the following Yoneda
argument in the category of presheaves $[\VCat^\cop,\Set]$,
\begin{align*}
        \VCat(Y, \VCat(\colim \Diag, X)) 
        &
        \cong \VCat(Y \otimes \colim \Diag, X)
        \\
        &
        \cong \VCat(\colim Y \otimes \Diag, X)
        \\
        &
        \cong \lim \VCat(Y \otimes \Diag, X)
        \\
        &
        \cong \lim \VCat(Y, \VCat(\Diag, X))
        \\
        &
        \cong \VCat(Y, \lim \VCat(\Diag, X))
\end{align*}
with the whole sequence of isomorphisms natural in $Y$. Let us now transport
all this rich structure to the full subcategories $\VCatSe$, $\VCatSy$, and
$\VCatSS$. Our method will hinge on the following adjoint functors,
\[
        \VCatSe \adjuncte{\mathsf{sep}}{} \VCat \adjuncteb{}{\mathsf{sym}} \VCatSy
        \adjunctopebb{}{\mathsf{sep}} \VCatSS
\]
Recall that the left adjoint $\mathsf{sep} : \VCat \to \VCatSe$ (resp. its
restriction to symmetric categories) is built in two stages~\cite{hofmann14}.
Given a $\Quant$-category $(X,a)$, one defines the equivalence relation $x \sim
y$ whenever $x \leq y$ and $y \leq x$ (where $\leq$ is the natural order
introduced earlier). This relation induces the \emph{separated}
$\Quant$-category $(X/_\sim, \tilde a)$, where $\tilde a$ is defined by $\tilde
a([x],[y]) = a(x,y)$ for all $[x],[y] \in X/_\sim$. Then, by general
results~\cite{cats} the existence of a reflection entails that $\VCatSe$ is
closed under the $\VCat$-enriched limits of $\VCat$. Not only this, the
definition of $\mathsf{sep}$ discloses almost immediately that this functor is
$\VCat$-enriched. Thus $\VCatSe$ inherits $\VCat$-enriched colimits from
$\VCat$ as well. As the next step in our quest, observe that if a
$\Quant$-category $Y$ is separated then $\VCat(X,Y)$ is separated as well for
any $\Quant$-category $X$.  One can thus appeal to Day's reflection
theorem~\cite{day72,malherbe13,lack14} to establish a $\VCat$-enriched
autonomous structure for $\VCatSe$. Putting everything together, we conclude
that $\VCatSe$ is a $\VCat$-co-Cartesian autonomous category, and in particular
a $\VCatSe$-co-Cartesian autonomous category.

The functor $\mathsf{sym} : \VCat \to \VCatSy$ sends a $\Quant$-category
$(X,a)$ into $(X, a \wedge a \comp \sw)$. While $\mathsf{sym}$ is not
$\VCat$-enriched, and thus one cannot proceed as categorically as before,
limits in $\VCat$ have an amenable description that makes it easy to see that
they are closed under symmetry. This means that $\VCatSy$ is closed under
$\VCat$-enriched limits from $\VCat$. On the other hand, dually to before, the
existence of a coreflection (\emph{viz.} $\mathsf{sym}$) entails that $\VCatSy$
is closed under the $\VCat$-enriched colimits of $\VCat$~\cite{cats}. Next, the
tensor of $\VCat$ is closed under symmetry and if a $\Quant$-category $Y$ is
symmetric then $\VCat(X,Y)$ is symmetric as well for any $\Quant$-category $X$.
Thus $\VCatSy$ inherits the $\VCat$-autonomous structure of $\VCat$. Putting
again everything together, we deduce that $\VCatSy$ is a $\VCat$-co-Cartesian
autonomous category, and in particular a $\VCatSy$-co-Cartesian autonomous
category. It remains to tackle the case $\VCatSS$, but the underlying reasoning
is essentially the same as the one we applied to $\VCatSe$.

All together, the reasoning of this subsection yields a large collection of
models our extended calculus (see Example~\ref{ex:quant} and
Example~\ref{ex:quant2}).
\subsection{Gluing construction}
        The Artin gluing construction is a well-known categorical technique for
        proving properties of formal systems, such as logics, type theories,
        and programming languages~\cite{crole93,hyland03}.  It relies on
        \emph{comma categories}, more specifically on $\Id \downarrow G$ for a
        given functor $G : \catC \to \catD$~\cite{maclane71}. It is usually
        required that $\Id \downarrow G$ is a model of the system at hand.
        Thus in the case of linear $\lambda$-calculus with additive
        disjunction, it should be co-Cartesian autonomous.
        Reference~\cite{hyland03} shows that such is indeed the case if $\catC$
        and $\catD$ are co-Cartesian autonomous, $G$ is lax monoidal, and
        $\catD$ has pullbacks. Our quantalic calculus, on the other hand,
        requires an extension of this result to the $\VCat$-enriched setting --
        this is what we will prove next.

        First recall that objects in $\Id \downarrow G$ are triples
        $(X,\phi,Y)$, with $X \in \catD$, $Y \in \catC$, and $\phi : X \to GY$
        a $\catD$-morphism. Also, morphisms between $(X,\phi,Y)$ and $(W,\psi,Z)$ are
        pairs $(f : X \to W, g : Y \to Z)$ such that $\psi \comp f = Gg \comp
        \phi$. Now, let $\catC$ and $\catD$ be $\VCat$-co-Cartesian autonomous
        categories, $G$ a lax monoidal $\VCat$-functor, and $\catD$ possess
        $\VCat$-enriched pullbacks. The core observation is that $\Id
        \downarrow G$ can be naturally enriched by taking in $\VCat$ the
        pullbacks below as hom-objects,
        \[
        \begin{tikzcd}[column sep=30pt, row sep=30pt]
                \Id \downarrow G\bigl((X,\phi,Y),(W,\psi,Z)\bigr)
                \arrow[r]
                \arrow[d]
                \arrow[dr, phantom, "\lrcorner" very near start]
                &
                \catC(Y,Z)
                \arrow[d, "((-)\comp\phi)\comp G_{Y,Z}"]
                \\
                \catD(X,W)
                \arrow[r, "\psi \comp (-)"']
                &
                \catD(X,GZ)
        \end{tikzcd}
        \]
        The remaining steps for showing that $\Id \downarrow G$ is an instance
        of Definition~\ref{defn:auto_coprod} boil down to `internalising'  the
        steps taken in~\cite{hyland03} in our basis of enrichment (\ie\
        $\VCat$). For example, if the tensor $\Big ( \Id \downarrow \Gamma(
        (X,\phi,Y) , (X',\psi,Y') ) \Big ) \otimes \Big ( \Id \downarrow
        \Gamma( (X',\psi,Y') , (X'',\chi,Y'') ) \Big )$ is abbreviated to $P$,
        composition is described by the following diagram, 
        \[
        \begin{tikzcd}[column sep=30pt, row sep=30pt]
                P
                \arrow[rr, "\pi_2 \otimes \pi_2"]
                \arrow[dr, dashed]
                \arrow[dd, "\pi_1 \otimes \pi_1"']
                & &
                \catC(Y,Y') \otimes \catC(Y',Y'')
                \arrow[d, "(\,.\,)"]
                \\
                &
                \Id \downarrow \Gamma\bigl((X,\phi,Y),(X'',\chi,Y'')\bigr)
                \arrow[r]
                \arrow[d]
                \arrow[dr, phantom, "\lrcorner" very near start]
                &
                \catC(Y,Y'')
                \arrow[d, "((-)\comp\phi)\comp G_{Y,Y''}"]
                \\
                \catD(X,X') \otimes \catD(X',X'')
                \arrow[r, "(\,.\,)"']
                &
                \catD(X,X'')
                \arrow[r, "\chi \comp (-)"']
                &
                \catD(X,GY'')
       \end{tikzcd}
       \]
       The case of currying is arguably the most complicated one to
       internalise, so let us sketch the construction.  First, the tensor of
       two objects $(X,\phi,Y)$ and $(X',\psi,Y')$ is given by $(X \otimes X'
       , m \comp (\phi \otimes \psi), Y \otimes Y')$ where $m_{A,B} : GA
       \otimes GB \to G(A \otimes B)$ is the natural transformation associated
       to $G$ being lax monoidal.  Second, the exponential of $(X,\phi,Y)$ and
       $(X',\psi,Y')$ is given by the top arrow in the ($\VCat$-enriched)
       pullback of $\catD$,
       \[
        \begin{tikzcd}[column sep=40pt, row sep=40pt]
                (X \multimap X') \times_p G(Y \multimap Y')
                \arrow[r]
                \arrow[d]
                \arrow[dr, phantom, "\lrcorner" very near start]
                &
                G(Y \multimap Y')
                \arrow[d, "((-) \comp \phi) \comp \overline{(G\app \comp m)}"]
                \\
                X \multimap X'
                \arrow[r, "\psi \comp (-)"']
                &
                X \multimap GY'
        \end{tikzcd}
        \]
        Now, observe that for any $Z \in \catD$ the functor $\catD(Z,-) : \catD
        \to \VCat$ must preserve this limit (a pullback), since the latter is
        $\VCat$-enriched. Finally consider $(X,\phi,Y)$, $(X',\psi,Y')$,
        $(X'',\chi,Y'')$ and let $P$ be hom-object of morphisms $(X, \phi, Y)
        \otimes (X',\psi,Y') \to (X'',\chi,Y'')$.  These data induce the
        diagram of pullbacks below, where the pullback in the bottom right
        corner is given precisely by the fact that $\catD(Z,-)$ is continuous
        w.r.t. pullbacks.
        \[
        \begin{tikzcd}[column sep=30pt, row sep=30pt]
                P
                \arrow[rr]
                \arrow[dr, dashed]
                \arrow[ddr, dashed, bend right=15]
                \arrow[d]
                & &
                \catC(Y \otimes Y',Y'')
                \arrow[d, "\overline{(-)}"]
                \\
                \catD(X \otimes X', X'')
                \arrow[dd, "\overline{(-)}"']
                & 
                \Id \downarrow G((X,\phi,Y),(X',\psi,Y') \multimap (X'',\chi,Y''))
                \arrow[d]
                \arrow[r]
                \arrow[dr, phantom, "\lrcorner" very near start]
                &
                \catC(Y, Y' \multimap Y'')
                \arrow[d, "((-) \comp \phi) \comp G_{Y,Y' \multimap Y''}"]
                \\
                &
                \catD \bigl(X,(X' \multimap X'') \times_p G(Y' \multimap Y'') \bigr)
                \arrow[r, "\pi_2 \comp (-) "]
                \arrow[d, "\pi_1 \comp (-) "']
                \arrow[dr, phantom, "\lrcorner" very near start]
                &
                \catD(X,G(Y' \multimap Y''))
                \arrow[d, "((-)\comp\psi)\comp \overline{G \app \comp m} \comp (-)"]
                \\
                \catD(X,X' \multimap X'')
                \arrow[r, equal]
                &
                \catD(X,X' \multimap X'')
                \arrow[r, "(\chi \comp (-)) \comp (-)"']
                &
                \catD(X,X' \multimap GY'')
       \end{tikzcd}
       \]
       The fact that the outer diagram commutes means that for every pair of
       morphisms $(f,g) \in P$ the following equation holds,
       \[
           (- \comp \psi) \comp \overline{G\app \comp m} \comp
           G\, \overline{g} \comp \phi = (\chi \comp -) \comp \overline{ f }
      \]
      We prove this via the following reasoning, which uses the universal
      property of exponentials and the monoidality of $G$,
        \begin{align*}
        (- \comp \psi) \comp \overline{G \app \comp m}  
        \comp G\,\overline{g} \comp \phi
        & =
        \overline{ G \app \comp m \comp (\id \otimes \psi)} 
        \comp G \,\overline{g} \comp \phi 
        \\
        & =
        \overline{ 
                G \app \comp m \comp (\id \otimes \psi) 
                \comp ((G \, \overline{g} \comp \phi) \otimes \id ) 
        }
        \\
        & = 
        \overline{
                G\app \comp m
                \comp (G \, \overline{g} \otimes \id) \comp (\phi \otimes \psi)
        }
        \\
        & =
        \overline{
                G \app 
               \comp G \,(\overline{g} \otimes \id) \comp m \comp (\phi \otimes \psi)
        }
        \\
        & =
        \overline{ Gg
        \comp m \comp (\phi \otimes \psi)}
        \\
        & = 
        \overline {\chi \comp f}
        \\
        & = 
        (\chi \comp -) \comp \overline{f}
        \end{align*}
        This establishes the existence of the lower mediating morphism which
        maps $(f,g)$ into $\pv{\overline{f}}{G\overline{g} \comp \phi}$. The
        upper mediating morphism then emerges directly.

\subsection{Enriched presheaves}
        \label{ex:presheaves}
Let us start by detailing the $\VCat$-category $[\catC^\cop, \VCat$] of enriched presheaves
where $\catC$ is a small, $\VCat$-symmetric monoidal category. Morphisms are
given by the $\VCat$-enriched limit~\cite{borceux94},
\[
        [\catC^\cop,\VCat](P,Q) \cong \int_X \VCat(P(X),Q(X))
\]
The category has $\VCat$-enriched (co)limits which are inherited from $\VCat$ via
pointwise extension. Not only this, it is $\VCat$-autonomous with the tensor
and exponential given by the $\VCat$-enriched (co)limits~\cite{day72},
\[
        P \otimes Q \cong \int^{X,Y} P(X) \otimes Q(Y) \otimes 
        \catC(-, X \otimes Y)
        \hspace{1.5cm}
        P \multimap Q \cong \int_X \VCat(P(X), Q(- \otimes X))
\]
This provides a series of examples of Definition~\ref{defn:auto_coprod}, and as
before, analogous examples arise by replacing the basis of enrichment $\VCat$
with any of the aforementioned full subcategories. 

Suppose now that $\catC$ is additionally $\VCat$-co-Cartesian, \ie\ that it has
binary coproducts enriched over the Cartesian structure of $\VCat$, and that
the tensor $\otimes$ distributes over them. Ideally  Yoneda
embedding $\catC \to [\catC^\cop,\VCat]$ would these preserve coproducts, which
unfortunately is false. In the \emph{non-enriched} setting, a way around this
problem is to work instead with the full subcategory of \emph{finite-product
preserving} presheaves and the respective restriction of Yoneda
embedding~\cite{fu22,malherbe13}. This subcategory is reflective. Thus it is
autonomous and has binary coproducts (inherited by the supercategory), as
needed for our calculus~\cite{day72}. We will show that this situation lifts to
the enriched setting.  Specifically we will show the existence of an analogous
reflection,
\[
        [\catC^\cop,\VCatSe]_{\fp} \adjuncte{\mathscr{L}}{i} [\catC^\cop,\VCatSe] 
\]
in the $\VCatSe$-enriched setting (and likewise for $\VCatSS$). This provides
yet another set of examples of Definition~\ref{defn:auto_coprod}, with the
restriction of Yoneda embedding $\catC \to [\catC^\cop,\VCatSe]_\fp$ now
preserving coproducts.

Interestingly the description of this left adjoint relies on our
$\Quant$-equational system adjusted to the setting of \emph{multi-sorted
universal algebra}~\cite{rosicky94}. We start by describing its action on
objects. Thus take a presheaf $P$ in $[\catC^\cop,\VCatSe]$. It induces the
following algebraic signature (in the sense of~\cite{rosicky94}).
\begin{itemize}
        \item the types are the objects of $\catC^\cop$;
        \item for every morphism $f : X \to Y$ in $\catC^\cop$ we have
                an operation symbol $f : X \to Y$;
        \item for every $c \in P(X)$ we have a constant $c : X$;
        \item for every product cone $\pi_i : X_1 \times X_2 \to X_i$ in
                $\catC^\cop$ we have an operation $(-,=) : X_1,X_2 \to X_1
                \times X_2$.
\end{itemize}
Note that in this restricted setting of universal algebra, ground types are the
only types available (\ie\ we do not form new types from simpler ones). Note as
well that the judgement formation rules are now non-linear and boil down to,
\[
      \begin{prooftree}
              \hypo{x : \typeA \in \Gamma}
              \infer1[\rulename{hyp}]{ \Gamma  \vljud x : \typeA}
      \end{prooftree}
      \hspace{2.5cm}
      \begin{prooftree}
                    \hypo{\Gamma \vljud t_i : \typeA}
                    \hypo{f \in \typeA_1, \dots , \typeA_n \to \typeA \in \Sigma}
                    \infer2[\rulename{ax}]{\Gamma \vljud f(t_1,\dots,t_n) : \typeA}
      \end{prooftree}
\]
Now, for every $\catC$-object $X$ the carrier of $\mathscr{L}(P)(X)$ is the set
of closed terms $- \vljud t : X$ modulo an equivalence relation based on a
certain $\Quant$-axiomatic schema. Specifically the latter is given by the
rules above the dotted line in Figure~\ref{fig:theo_rules}. The rules below the
dotted line correspond to the $\Quant$-(in)equational system of multi-sorted
universal algebra.  As usual we call \emph{theorems} those (in)equations
$\Gamma \vljud t \leq_q s : \typeA$ that can be deduced from the rules in
Figure~\ref{fig:theo_rules}. And finally for every $\catC$-object $X$,
$\mathscr{L}(P)(X)$ is the \emph{separated} $\Quant$-category defined by,
\[
        \Big (\{ t \mid - \vljud t : X \}, 
        (t,s) \mapsto \bigvee \{ q \mid t \leq_q s \text{ is a theorem} \} \Big )_\mathsf{sep}
\]
This yields an enriched, finite-product preserving functor by construction and
a natural transformation $\eta_P : P \to i \mathscr{L}(P)$ as well.
\begin{figure}[h!]
\[
      \begin{prooftree}
              \hypo{c \in X, d \in Y}
              \hypo{f : X \to Y \in \catC^\cop}
              \hypo{P(f)(c) = d}
              \infer3[]{- \vljud f(c) = d  : Y}
      \end{prooftree}
      \hspace{1cm}
      \begin{prooftree}
              \hypo{\Gamma \vljud t : X}
              \infer1[]{\Gamma \vljud \id(t) = t : X}
      \end{prooftree}
      \hspace{1cm}
      \begin{prooftree}
              \hypo{\Gamma \vljud t : X}
              \infer1[]{\Gamma \vljud g \comp f(t) = g(f(t)) : X}
      \end{prooftree}
\]

\vspace{0.5cm}
\noindent
For every $\catC^\cop$-product cone $\pi_i : X_1 \times X_2 \to X_i$
\[
        \begin{prooftree}
                \hypo{\Gamma \vljud t_i : X_i}
                \infer1[]{\Gamma \vljud \pi_i \, (t_1,t_2) = t_i : X_i}
        \end{prooftree}
        \hspace{2.5cm}
        \begin{prooftree}
                          \hypo{\Gamma \vljud t : X_1 \times X_2}
                          \infer1[]{ \Gamma \vljud ( \pi_1 \, t , \pi_2 \, t ) = t : X_1 \times X_2}{}
        \end{prooftree}
\]

\vspace{0.5cm}
For all $\catC^\cop$-morphisms $f,g : X \to Y$, elements $c,d \in P(X)$, and elements $q \in B$
\[
        \begin{prooftree}
                \hypo{q \leq a(c,d)}
                \infer1[]{- \vljud c \leq_q d : X}
        \end{prooftree}
        \hspace{2.5cm}
        \begin{prooftree}
                          \hypo{q \leq a(f,g)}
                          \infer1[]{ x : X \vljud f(x) \leq_q g(x) : Y}
        \end{prooftree}
\]

    \vspace{0.3cm}
    \dotfill

\[
        \infer[(\textbf{refl})]{t \leq_k t}{}
        \qquad \qquad
        \infer[(\textbf{trans})]{t \leq_{q \otimes r} u}{ t \leq_q s \quad s \leq_r u}
        \qquad \qquad
        \infer[(\textbf{weak})]{t \leq_r s}{t \leq_q s \qquad r \leq q}
        \qquad \quad
        \infer[(\textbf{join})]{t \leq_{\vee q_i} s}{\forall i \leq n. \> t \leq_{q_i} s}
\]
\\[2pt]
\[
        \begin{prooftree}
                \hypo { \Gamma, x : \typeA \vljud t \leq_q s }
                \infer1[]{\Gamma \vljud t[u/x]  \leq_q s[u/x]}
        \end{prooftree}
        \hspace{2cm}
        \begin{prooftree}
                    \hypo{ \Gamma \vljud t \leq_q s}
                    \hypo{ \Gamma \simeq_\pi \Delta }
                    \infer2[]{ \Delta \vljud t \leq_q s}
        \end{prooftree}
        \hspace{2cm}
        \begin{prooftree}
                \hypo{\forall i \leq n. \>  t_i \leq_{q_i} s_i}
                \infer1[]{f(t_1,\dots,t_n) \leq_{\wedge q_i} f(s_1,\dots,s_n)}
        \end{prooftree}
\]
\caption{$\Quant$-axiomatic schema of the presheaf $P$ (above dotted line) and the
$\Quant$-(in)equational system of multi-sorted universal algebra (below the dotted
line).}
\label{fig:theo_rules}
\end{figure}
As the next step in establishing that $\mathscr{L}$ is left adjoint to the
inclusion $i$, consider now a natural transformation $\alpha : P \to i(Q)$. We
wish to build another one $\hat \alpha : \mathscr{L}(P) \to Q$ such that
$i(\hat \alpha) \comp \eta = \alpha$. Interestingly the latter equation,
naturality, and the fact that $Q$ preserves products, entails the uniqueness of
this construction.  Specifically $\hat \alpha$ must be defined by,
\[
        \hat \alpha ([c]) = \alpha (c)
        \qquad
        \hat \alpha([f(t)]) = Q(f) (\hat \alpha ([t]))
        \qquad
        \hat \alpha ([(t,s)]) = \pv{Q(\pi_1)}{Q(\pi_2)}^{-1}
        (\hat \alpha ([t]), \hat \alpha ([s]))
\]
Indeed, the fact that the first and second equation must hold is direct.  As for
the third equation, observe that due to naturality and the fact that $Q$
preserves products the following diagram must commute.
\[
        \begin{tikzcd}[column sep=40pt, row sep=30pt]
                \mathscr{L}(P)(X \times Y)
                \arrow[r, "\alpha_{X \times Y}"]
                \arrow[d, "\pv{\mathscr{L}(\pi_1)}{\mathscr{L}(\pi_2)}"']
                &
                Q(X \times Y)
                \arrow[d, "\pv{Q(\pi_1)}{Q(\pi_2)}"']
                \\
                \mathscr{L}(P)(X) \times
                \mathscr{L}(P)(Y)
                \arrow[r, "\alpha_X \times \alpha_Y"']
                &
                Q(X \times Y)
                \arrow[u, bend right=40, "\pv{Q(\pi_1)}{Q(\pi_2)}^{-1}"']
        \end{tikzcd}
\]
The fact that $\hat \alpha$ is well-defined (\ie\ that it respects the
previous equivalence relation) is a corollary of the following result. If $x_1
: \typeA_1, \dots, x_n : \typeA_n \vljud t \leq_q s$ then,
\[
        \forall (- \vljud u_i : \typeA_i)_{1 \leq i \leq n} . \> \>
        q \leq a( \hat \alpha ([t[u_1/x_1]\dots[u_n/x_n]]) , 
        \hat  \alpha ([s[u_1/x_1]\dots[u_n/x_n]]) ) 
\]
This is proved by a straightforward induction over the depth of proof
derivation trees. The fact each each component $\hat \alpha_X$ is a
$\Quant$-functor follows from the same result.  The final step in our quest is
to prove that the mapping $\alpha \mapsto \hat \alpha$ just defined is a
$\Quant$-functor. This follows from the result,
\[
        \forall (- \vljud t : \typeA). \> a(\alpha,\beta) \leq a(\hat
        \alpha([t]), \hat \beta([t]))
\]
which is straightforward to prove via induction on the structure of $t$.

We have thus established that $\mathscr{L}$ is left adjoint to $i$.  Let us
recall from the literature why one can now deduce that
$[\catC^\cop,\VCatSe]_{\fp}$  has all the properties in need to be an instance
of Definition~\ref{defn:auto_coprod}.  First since it is a reflective
subcategory of $[\catC^\cop,\VCatSe]$ it has $\VCatSe$-enriched
(co)limits~\cite{cats}. Second, recall the assumption that the tensor in
$\catC$ distributes over coproducts. This entails that the exponential
$\multimap$ in $[\catC^\cop,\VCatSe]$ restricts to
$[\catC^\cop,\VCatSe]_{\fp}$, specifically,
\begin{align*}
        P \multimap Q (Y + Z) & \cong \int_X \VCat(P(X), Q((Y + Z) \otimes X))
        \\
        & 
        \cong \int_X \VCat(P(X), Q((Y \otimes X) + (Z \otimes X))
        \\
        &
        \cong \int_X \VCat(P(X), Q(Y \otimes X) \times Q (Z \otimes X))
        \\
        & 
        \cong \int_X \VCat(P(X), Q(Y \otimes X))
        \times
        \int_X \VCat(P(X),Q (Z \otimes X))
        \\
        &
        \cong P \multimap Q (Y) \times P \multimap Q (Z) 
\end{align*}
The $\VCatSe$-autonomous structure of $[\catC^\cop,\VCatSe]_{\fp}$ is then
established via Day's reflection theorem~\cite{day72,malherbe13,lack14}:  the
tensor of $P$ and $Q$ is $\mathscr{L}(i(P) \otimes i(Q))$ and the exponential
is inherited from the supercategory.

\section{Probabilistic and Quantum Programming}
\label{sec:prob_quantum}
\subsection{Probabilistic programming}
        We now present two concrete models of our calculus, in the context
        of probabilistic and quantum programming. We start with the
        probabilistic case.  Consider the category $\Ban$ of Banach spaces and
        linear \emph{contractions} -- a core category in functional analysis
        and the basis of a semantic framework for (higher-order) probabilistic
        programming~(see
        \cite{dahlqvist19,dahlqvist22,dahlqvist23,dahlqvist23b}).  It was
        already shown in the~\emph{op. cit.} that it is $\Met$-autonomous. Note
        as well that it has binary coproducts, given by the direct sum of
        vector spaces and the $\ell_1$-norm. Thus it remains to show that such
        coproducts are enriched over the Cartesian structure of $\Met$ (recall
        Section~\ref{sec:sem}). This is somewhat folklore, but we sketch a
        proof below for completeness.  First, given a linear map $T : V \to W$
        between Banach spaces we define,
        \[
         \norm{T} = \bigvee \left \{ \norm{T(v)} \mid v \in V, \norm{v} = 1 \right \}
        \]
        Linear contractions are precisely those linear maps $T$ such that $\norm{T}
        \leq 1$, and the metric $d(T,S)$ between two contractions $T$ and $S$ is set
        as $\norm{ T - S}$. Given $T : V \to W$ and $S : U \to W$ their co-pairing
        $[T,S] : V \oplus U \to W$ is defined by $[T,S](v,u) = T(v) + S(u)$. The fact
        that $[T,S]$ is contractive follows from the inequation $\norm{[T,S]} \leq \max
        \{ \norm{T}, \norm{S} \}$ for all linear maps $T,S$. This becomes straightforward to
        prove when one notices that every unitary vector $(v,u) \in V \oplus U$ can be
        rewritten as,
\[
        \left (\norm{v} \frac{1}{\norm{v}} v, \norm{u} \frac{1}{\norm{u}} u \right )
        \hspace{1cm}
        \norm{v} + \norm{u} = 1
\]
The fact that binary coproducts in $\Ban$ are enriched over the Cartesian
structure of $\Met$ then follows rather directly,
\begin{align*}
        d([T,S] , [T',S']) 
        & = 
        \norm{ [T,S] - [T',S'] }
        \\
        & = 
        \norm{ [T - T', S - S'] }
        \\
        & \leq
        \max \{ \norm{T - T'}, \norm{S - S'} \}
        \\
        & =
        \max \{ d(T,T'), d(S,S') \}
\end{align*}
In order to introduce probabilities to this model, we will involve the notion
of a \emph{measure} which we briefly describe next  (see \eg\ \cite[Chapter
10]{aliprantis06} or \cite[Chapter 2]{panangaden09} for a thorough account).
\begin{definition} For a measurable space $(X,\Sigma_X)$ a measure is a
        function $\mu : \Sigma_X \to \Reals$ such that $\mu(\emptyset) = 0$ and
        moreover it is $\sigma$-additive, \ie\ 
        \[
                \mu \left (\bigcup_{i =1}^{\infty} U_i \right ) = \sum_{i = 1}^{\infty}
                \mu(U_i) 
        \] 
where $(U_i)_{i \in \omega}$ is any family of pairwise disjoint measurable
sets.  A measure $\mu$ is called \emph{positive} if $\mu(U) \geq 0$ for all
measurable sets $U$ and a \emph{distribution} if furthermore $\mu(X) =1$.  
\end{definition}
It follows from $\sigma$-additivity that a 
measure $\mu$ satisfies,
\[
\mu \left (\bigcup_{i \in \omega} U_i \right ) = \bigvee_{i \in \omega} \mu(U_i)
\]
for any increasing
sequence $(U_i)_{i \in \omega}$ of measurable subsets.  
Next, for a measurable space $X$ the set of measures $M(X)$ forms a vector space
via pointwise extension. It also forms a Banach space when equipped with the
total variation norm,
\[
        \lVert \mu \rVert = 
        \bigvee \left \{ \sum_{i = 1}^n \, \lVert \mu(U_i) \rVert \mid
             \{ U_1, \dots, U_n \} \text{ is a measurable partition}
        \right \}
\]
In many cases such a norm can be difficult to compute, but note that there are
helpful results in this regard.  For example, a useful fact about computing
norms is that $\norm{\mu} = \mu^{+}(X) + \mu^{-}(X)$ where $\mu^{+}$ and
$\mu^{-}$ are the positive and negative parts of $\mu$ respectively (see
details in~\cite[Section 8.2. and Section 10.10]{aliprantis06}). We will use
this result later on.

Now, we proceed by presenting a metric $\lambda$-theory
(Definition~\ref{defn:theory}) on which to reason about predicates and random
walks. This connects us directly to our concrete motivation in
Section~\ref{sec:intro}. Our only ground type will be $\mathtt{real}$ to
represent measures over real numbers -- \ie\ we set $\sem{\mathtt{real}}$ to be
the space $M(\Reals)$ of measures over the real line. Recall that the monoidal
unit of $\Ban$ is $\Reals$. Concerning operations we take a pre-determined set
of predicates $p : \mathtt{real} \to \typeI \oplus \typeI$ whose interpretation
takes the form $\sem{p}(\mu) = (\mu(U), \mu(\overline{U})) \in \Reals \oplus
\Reals$ for some measurable subset of $U \subseteq \Reals$.  Intuitively $U
\subseteq \Reals$ corresponds to the subspace in which the predicate is
supposed to hold. We also take a pre-determined set of actions $a : \typeI \to
(\typeA \multimap \typeA)$ and a pre-determined set of measures $m : \typeI \to
\mathtt{real}$ whose interpretation takes no particular form.  Finally we also
include addition $+ : \mathtt{real},\mathtt{real} \to \mathtt{real}$, whose
interpretation is given by $\mu \otimes \nu \mapsto +_\ast(\mu \otimes \nu)$
where $+_\ast$ is the pushforward measure construction of $+$ (see further
details in~\cite{dahlqvist22,dahlqvist23}).

We will now briefly illustrate \emph{quantitative} reasoning in this probabilistic
framework. In particular we will highlight the synergies that arise by the
newfound possibility of using both quantitative syntactic and semantic
machinery in tandem. As mentioned in Section~\ref{sec:intro}, we will focus
specifically on (Cauchy sequences of) random walks.  
First, given a measure $m$ and actions $a,b$
consider the following `abstract' Bernoulli trial,
\[
        p : \mathtt{real} \multimap \typeI \oplus \typeI
        \vljud 
        \underbrace{\ccelim{p(m(\ast))}{a(x)}{b(y)}}_{\mathtt{bern}(p)} 
        : \typeA \multimap \typeA
\]
Note that this is an abstract version of a step of a random walk as described
in Section~\ref{sec:intro}.  Note as well that if the metric equation
$p_1(m(\ast)) =_\epsilon p_2(m(\ast))$ holds for two predicates $p_1, p_2 :
\mathtt{real} \to \typeI \oplus \typeI$ then the equation
$\mathtt{bern}(\lambda x. p_1(x)) =_\epsilon \mathtt{bern}(\lambda x.  p_2(x))$
must hold as well (as per our equational system, in particular the
$\beta$-equation w.r.t.~$\lambda$-abstractions). Such is useful to approximate
Bernoulli trials that may be hard to compute as illustrated by the following
examples.
\begin{example}[Predicates and Cauchy sequences]
        Take a measure $m$ and the predicate,
        \[
                x : \mathtt{real} \vljud
                p_{\frac{1}{2}\sqrt{2}}(x) : \typeI \oplus \typeI
        \]
        that returns true if $x < \frac{1}{2}\sqrt{2}$ and false otherwise.
        Given the irrationality of $\frac{1}{2}\sqrt{2}$ it is natural to
        consider successive approximations $(-) \vljud p_{q_n}(m(\ast))
        :  \typeI \oplus \typeI$ $(n \in \Nats)$ in which the
        condition $x < \frac{1}{2}{\sqrt{2}}$ is replaced by $x < q_n$ for
        $q_n$ a rational number. We show next how our framework makes this idea
        precise. Take a sequence of rational numbers $(q_n)_{n \in \Nats}$ that
        converges to $\frac{1}{2}\sqrt{2}$ from below. We then postulate as
        axioms in our deductive system that $(p_{q_n}(m(\ast)))_{n \in \Nats}$ is a
        Cauchy sequence and furthermore that it converges to $p_{\frac{1}{2}
        \sqrt{2}}(m(\ast))$.  Such is asserted precisely by setting,
        \begin{equation}
                \label{eq:sound}
                \begin{cases}
                \forall \epsilon > 0. \, \exists k \in \Nats.
                \, \forall n \geq k. \, p_{q_n}(m(\ast)) =_\epsilon p_{q_{n+1}} (m(\ast))
                & \text{(Cauchy sequence)}
                \\
                \forall \epsilon > 0. \, \exists k \in \Nats.
                \, \forall n \geq k. \, p_{q_n}(m(\ast)) 
                =_\epsilon p_{\frac{1}{2} \sqrt{2}} (m(\ast))
                & \text{(Convergence)}
                \end{cases}
        \end{equation}
        for appropriate choices of $k$ (which in our context is irrelevant to
        detail). The next step is to prove that this axiomatics is sound, \ie\
        that such equations hold in $\Ban$ (meaning that we will now proceed
        with a semantic approach). In particular we make the following
        reasoning,
        \begin{flalign*}
                \sem{p_{\frac{1}{2} \sqrt{2}}(x)}(\mu)
                & = \left (\mu \left (-\infty, \frac{1}{2} \sqrt{2} \right ), 
                \mu(\Reals) - \mu \left (-\infty, \frac{1}{2}\sqrt{2} \right ) \right )
                \\
                & = \left (\mu \left (\bigcup_{n \in \Nats} 
                        \left (-\infty, q_n \right )\right ), 
                \mu(\Reals) - \mu \left (-\infty, \frac{1}{2}\sqrt{2} \right ) \right )
                & 
                \left \{(q_n)_{n \in \Nats} 
                \nearrow \frac{1}{2}\sqrt{2} \right \}
                \\
                & = \left (\sup_{n \in \Nats} \mu \left ( 
                        \left (-\infty, q_n \right )\right ), 
                \mu(\Reals) - \mu \left (-\infty, \frac{1}{2}\sqrt{2} \right ) \right )
                & 
                \left \{ \text{Measure properties} \right \}
                \\
                & = \left (\lim_{n \to \infty} \mu \left ( 
                        \left (-\infty,  q_n \right )\right ), 
                \mu(\Reals) - \mu \left (-\infty, \frac{1}{2}\sqrt{2} \right ) \right )
                & 
                \left \{ \text{Limits coincide with sup. of inc. seq.} \right \}                         \\
                & = \left (\lim_{n \to \infty} \mu \left ( 
                        \left (-\infty, q_n \right )\right ), 
                \lim_{n \to \infty}
                \mu \overline{\left (-\infty, q_n \right )} \right )
                & 
                \left \{ \text{Measure properties} \right \}
                \\
                & = \lim_{n \to \infty} \left (\mu \left ( 
                        \left (-\infty,  q_n \right )\right ), 
                \mu \overline{\left (-\infty, q_n \right )} \right )
                & 
                \\
                & = \lim_{n \to \infty}
                \sem{p_{q_n}(x)}(\mu)
                &
        \end{flalign*}
        Thus we are indeed able to approximate with arbitrary precision a step
        of a random walk $\mathtt{bern}(\lambda x.  \, p_{\frac{1}{2} \sqrt{2}}
        (x) )$, using successive approximations $p_{q_n}$ of the predicate
        $p_{\frac{1}{2} \sqrt{2}}(x)$.  In the next example we further
        capitalise on such approximations, now formulated precisely, to reason
        about approximations of (bounded) random walks.
        \end{example}

\begin{example}[Random walk approximations] 
        We consider the $\lambda$-term, which operationally speaking sequences
        $k$ terms given as input,
        \[
                (-) \vljud \underbrace{\lambda x_1. \, \dots \, x_k. \, y. \,
                x_1 (\dots (x_k(y)) \dots)}_{\mathtt{sequence_k}}
        \]
        Also
        given a predicate $p : \mathtt{real} \to \typeI \oplus \typeI$, take
        $(-) \vljud \mathtt{sequence_k} \>
        \mathtt{bern}(\lambda x. \, p(x)) \dots \, \mathtt{bern}(\lambda x. \,
        p(x)) : \typeA \multimap \typeA$ which intuitively represents an
        \emph{abstract random walk of $k$-steps}. In order to keep our notation
        simple, we abbreviate this last term to $\mathtt{rwalk}(\lambda x. \,
        p(x))$. Now, it follows from our system, specifically from metric
        congruence and the previous example, that if we have $p_1(m(\ast)) =_\epsilon
        p_2(m(\ast))$ for two predicates $p_1$ and $p_2$ and a measure $m$ then,
        \[
                \mathtt{rwalk}(\lambda x. p_1(x)) =_{k \cdot \epsilon}
                \mathtt{rwalk}(\lambda x. p_2(x)) 
        \]
        In particular, from the previous example we deduce that $\mathtt{rwalk}
        (\lambda x. \, p_{q_n} (x))$ is a Cauchy sequence that converges to
        $\mathtt{rwalk}(\lambda x.  \, p_{\frac{1}{2} \sqrt{2}} (x) )$. In
        other words, the approximations obtained in the previous example
        propagate to the corresponding random walks if bounded.
        \end{example}
        \begin{example}
        As a final illustration of the synergy between syntax and semantics
        that our framework provides, suppose now that the actions $a,b : \typeI
        \to \typeA \multimap \typeA$ involved in $\mathtt{bern}(\lambda x. \,
        p(x))$ are concrete jumps on the real line. Specifically, we axiomatise,
        \[
                a(\ast) = \lambda z. \, +(z, \mathtt{unif}(0,1)(\ast))
                \hspace{1cm}
                b(\ast) = \lambda z. \, +(z, \mathtt{unif}(-1,0)(\ast))
        \]
        where $\mathtt{unif}(0,1) \in M(\Reals)$ is the uniform distribution on
        the interval $[0,1]$ and analogously for $\mathtt{unif}(-1,0)$.
        Operationally $a$ corresponds to a jump to the right with magnitude
        between $0$ and $1$, and analogously for $b$.  Suppose we have another
        action $c : \typeI \to (\mathtt{real} \multimap \mathtt{real})$ whose
        interpretation is that of $a$ except for the fact that
        $\mathtt{unif}(0,1)$ is replaced by $\mathtt{unif}(0,1+\delta)$.  What
        will be the effect on the random walk when replacing $a$ by $c$?  Our
        approach starts \emph{semantically}, by computing the norm
        $\norm{\mathtt{unif}(0,1) - \mathtt{unif}(0,1+\delta)}$: per our previous
        remarks about measures it decomposes into,
        \begin{flalign*}
                (\mathtt{unif}(0,1) - \mathtt{unif}(0,1+\delta))^+ (\Reals)
                +
                (\mathtt{unif}(0,1) - \mathtt{unif}(0,1+\delta))^- (\Reals)
        \end{flalign*}
        then proceed by computing the left-hand side of this addition,
        \begin{flalign*} 
        & \, (\mathtt{unif}(0,1) - \mathtt{unif}(0,1+\delta))^+ (\Reals)
        &
        \\
        & = 
        \bigvee \{ \mathtt{unif}(0,1)(U) - \mathtt{unif}(0,1 + \delta)(U)
        \mid U \subseteq \Reals \}
        \\
        & 
        =
        \bigvee \{ \mathtt{unif}(0,1)(U \cap [0,1]) 
        - \mathtt{unif}(0,1 + \delta)(U \cap [0,1]) 
        - \mathtt{unif}(0,1 + \delta)(U \cap (1,1 + \delta]) 
        \mid U \subseteq \Reals \}
        \\
        &
        = \bigvee \left \{ \left (1 - \frac{1}{1+\delta} \right ) 
                \mathtt{unif}(0,1)(U \cap [0,1]) 
        - \mathtt{unif}(0,1 + \delta)(U \cap (1,1 + \delta]) 
        \mid U \subseteq \Reals \right \}
        \\
        & = 1 - \frac{1}{1+\delta} 
        \end{flalign*}
        It follows from an analogous reasoning that the right-hand side of the
        addition will be $\frac{\delta}{1 + \delta}$ and therefore the norm
        will be exactly $2 \cdot {\frac{\delta}{1 + \delta}}$. Thus
        we can soundly axiomatise the metric equation
        $\mathtt{unif}(0,1)(\ast) =_{2 \cdot {\frac{\delta}{1 + \delta}}}
        \mathtt{unif}(0,1+\delta)(\ast)$, which by metric congruence
        entails $a(\ast) =_{2 \cdot {\frac{\delta}{1 + \delta}}} c(\ast)$.
        We continue by reasoning \emph{syntactically}, now about a step of 
        the previous abstract random walk.
        \begin{flalign*}
               & \, \ccelim{p(m(\ast))}{a(x)}{b(y)}
               &
               \\
               & =_0
               \ccelim{p(m(\ast))}{\ielim{x}{a}(\ast)}{b(y)}
               &
               \\
               & =_{2 \cdot {\frac{\delta}{1 + \delta}}}
               \ccelim{p(m(\ast))}{\ielim{x}{c}(\ast)}{b(y)}
               &
               \\
               & =_0
               \ccelim{p(m(\ast))}{c(x)}{b(y)}
        \end{flalign*}
        Thus again by metric congruence
        if $\mathtt{rwalk}(\lambda x. p(x))$ is the random walk that
        involves action $a$ and $\mathtt{rwalk'}(\lambda x. p(x))$ the random
        walk that involves action $c$ we prove from our framework the metric
        equation,
        \[
                \mathtt{rwalk}(\lambda x. p(x)) =_{2k \cdot \frac{\delta}{1 + \delta}}
                \mathtt{rwalk'}(\lambda x. p(x)) 
        \]
        which will converge to $0$ as $\delta$ tends to $0$.
\end{example}

\subsection{Quantum programming}

We will now construct a $\Met$-enriched category for (higher-order) quantum programming.
Let $\CPTP$ be the category whose objects are the natural numbers $n \in \Nats$
and morphisms $n \to m$ are the \emph{quantum channels} $\Phi : \Complex^{n
\times n} \to \Complex^{m \times m}$, \ie\ completely positive
\emph{trace-preserving} operators (see a thorough account of such maps in
\eg~\cite{watrous18}).  This category is a $\Met$-enriched symmetric monoidal
category~\cite{dahlqvist23}.  Unfortunately it does not meet the pre-requisite
of having binary coproducts.  The latter however in some sense ``live hidden''
in the category and what is required is to ``draw them out" in a suitable way.
Our strategy for achieving this is based on the observation that while $\CPTP$
does not have coproducts it has enough projections (\ie\ idempotents) to carve
coproducts out of them.  Technically this is obtained via the so-called
\emph{Karoubi envelope} $\Kar{-}$~\cite{borceux94} (also known as the
idempotent split completion).

\begin{definition}
        The Karoubi envelope $\Kar{\CPTP}$ of $\CPTP$ has as objects pairs $(n,
        P)$ with $P : \Complex^{n \times n} \to \Complex^{n \times n}$ an
        idempotent quantum channel. A morphism $\Phi : (n,
        P) \to (m,Q)$ in $\Kar{\CPTP}$ is a $\CPTP$-morphism $n \to m$  such that
        $Q \comp \Phi \comp P = \Phi$. The identity of $(n ,P)$ is precisely
        $P$ and composition is inherited from $\CPTP$.
\end{definition}
The category $\Kar{\CPTP}$ inherits useful structure from $\CPTP$.  First since
$\CPTP$ is already $\Met$-enriched, the category $\Kar{\CPTP}$ becomes
$\Met$-enriched as well by stipulating the metric in $\Kar{\CPTP}(n,m)$ to be
the respective restriction of the metric in $\CPTP(n,m)$. Second $\Kar{\CPTP}$
inherits symmetric monoidal structure from $\CPTP$ (see for
example~\cite[Proposition 6.5.9]{borceux94}), where in particular the tensor of
$(n,P) \otimes (m,Q)$ will be $(n m , P \otimes Q)$ and the tensoring of
morphisms is as in $\CPTP$. The category $\Kar{\CPTP}$ is thus even a
$\Met$-enriched symmetric monoidal category. Our next step is to show that it is
also $\Met$-co-Cartesian.
\begin{remark}
        The idea of using the Karoubi envelope to build a co-Cartesian category
        for quantum computing was first proposed in~\cite{selinger08b} -- where
        the author applies (a variant of) this construction to the category
        $\CPM$ of completely positive operators.  This contrasts with our
        $\Met$-enriched setting, which imposes \emph{trace-preservation} to the
        underlying morphisms (thus our adoption of $\CPTP$ rather than $\CPM$;
        see~\cite{dahlqvist23} for more details, but note for example that
        trace-preservation ensures that the substitution rule from
        Figure~\ref{fig:equations-in-context-cond} is sound). 

        Another natural candidate for a co-Cartesian quantum model is the
        coproduct completion of $\CPTP$ (see details of this construction in
        \eg~\cite{adamek20}). However a close inspection of the resulting
        category reveals that it has ``too few'' morphisms. For example, while
        the qubit measurement operation is supposed to be interpreted as a map
        of type $2 \to 1 + 1$ (see~\cite{selinger08b,dahlqvist23}), the
        category only harbours maps $2 \to 1 + 1$ either of the form $A \mapsto
        (\mathop{Tr} A,0)$ or $A \mapsto (0,\mathop{Tr} A)$ where $\mathop{Tr}$
        is the trace operation.  Yet another option is to first take the
        bicompletion of $\CPM$~\cite{selinger07} and then \emph{restrict} it to
        trace-preserving maps.  The issue is that this restriction does not
        have an obvious $\Met$-enrichment \emph{a priori}. An interesting
        question, which we do not pursue here, is whether this restriction can
        be equipped with a suitable $\Met$-enrichment via our category
        $\Kar{\CPTP}$ and the embedding results presented in~\cite{heunen13}.
\end{remark}

Let us thus show that $\Kar{\CPTP}$ has binary coproducts. We start by
focussing on the notion of co-pairing.  Take a quantum channel $\Phi : n \to m$
and consider $[\Phi ,0] : n+o \to m$ defined by,
\[
 \begin{pmatrix}
         A & B 
         \\
         C & D
 \end{pmatrix} 
 \quad
 \longmapsto 
 \quad
 \Phi(A)
\]
By an appeal to Kraus' representation theorem~\cite[Theorem 2.22 and Corollary
2.27]{watrous18} one easily shows that this map is completely positive (but not
trace-preserving), as detailed next.  
\begin{flalign*}
        [\Phi, 0]
        \begin{pmatrix}
         A & B 
         \\
         C & D
        \end{pmatrix} 
        & =
        \Phi(A)
        \\
        &
        = 
        \sum_i M_i A M^\dagger_i
        & \big \{ \text{Kraus' representation theorem} \big \}
        \\
        &
        = \sum_i 
        \begin{pmatrix}
        M_i & 0 
        \end{pmatrix} 
        \begin{pmatrix}
        A & B
         \\
        C & D
        \end{pmatrix} 
        \begin{pmatrix}
        M_i^\dagger \\
        0 
        \end{pmatrix} 
\end{flalign*}
where $(-)^\dagger$ is the usual adjoint operation.
Observe the existence of an analogous operator $[0,\Phi] : o + n \to m$.
Recall as well that Kraus' representation theorem entails that the addition of
completely-positive operators is also completely positive. This means that
given quantum channels $\Phi : n \to m$ and $\Psi : o \to m$ their
`co-pairing',
\[
        [\Phi, \Psi] := [\Phi, 0] + [0, \Psi]
\]
will be completely positive as well. What is more, it will be trace-preserving,
for it is easy to see that the respective Kraus' operators add up to the block
identity matrix.  Next, consider the quantum channels $e_l : n \to n +m$ and
$e_r : m \to n + m$,
\[
        A \quad \longmapsto \quad
        \begin{pmatrix}
                A & 0 
                \\
                0 & 0
        \end{pmatrix}
        \qquad
        \qquad
        D \quad \longmapsto \quad
        \begin{pmatrix}
                0 & 0 
                \\
                0 & D
        \end{pmatrix}
\]
which have Kraus' operators $\begin{pmatrix} I \\ 0 \end{pmatrix}$ and
$\begin{pmatrix} 0 \\ I \end{pmatrix}$ respectively.  We define the `direct
sum' of quantum channels $\Phi \oplus \Psi$ by setting $\Phi \oplus \Psi = [e_l
\comp \Phi, e_r \comp \Psi]$, a quantum channel by construction. Observe that
$\Phi \oplus \Psi$ is a particular case of a so-called
\emph{quantum-to-classical} channel~\cite[Definition 2.35]{watrous18}, and
specifically $\id \oplus \id$ is a block matrix generalisation of the
\emph{completely dephasing} channel~\cite[Equation (2.162)]{watrous18}. The
crucial observation then is that idempotents in $\CPTP$ of the form $\Phi
\oplus \Psi$ give rise to coproducts in $\Kar{\CPTP}$, as we will show in the
following proposition.
\begin{proposition}
        The category $\Kar{\CPTP}$ has binary coproducts.
\end{proposition}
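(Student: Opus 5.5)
Given objects $(n,P)$ and $(m,Q)$ of $\Kar{\CPTP}$, I take their coproduct to be $(n+m, P \oplus Q)$, with injections $\iota_l := (P\oplus Q)\comp e_l \comp P$ and $\iota_r := (P\oplus Q)\comp e_r \comp Q$. Since $(P\oplus Q)\comp e_l = e_l \comp P$, the left injection simplifies to $e_l \comp P$, and similarly $\iota_r = e_r \comp Q$. Co-pairing is the channel-level co-pairing $[\Phi,\Psi] : n+m \to o$, where $\Phi : (n,P) \to (o,R)$ and $\Psi : (m,Q) \to (o,R)$.

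The first step is to check that $P \oplus Q$ is an idempotent quantum channel. It is a channel by construction, being $[e_l\comp P, e_r \comp Q]$. For idempotence I compute on a block matrix with diagonal blocks $A$ and $D$:
\[
(P\oplus Q)\begin{pmatrix} A & B \\ C & D\end{pmatrix} = \begin{pmatrix} P(A) & 0 \\ 0 & Q(D)\end{pmatrix}.
\]
This yields two identities: $[\Phi,\Psi]\comp e_l = \Phi$, and $(\Phi\oplus\Psi)\comp e_l = e_l\comp\Phi$, together with their right-hand analogues. Applying the map twice then gives $P(P(A))=P(A)$ and $Q(Q(D))=Q(D)$, so $P \oplus Q$ is idempotent.

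The second step is to verify that all the maps are morphisms of $\Kar{\CPTP}$. For the injection, $(P\oplus Q)\comp e_l\comp P\comp P = e_l \comp P$. For the co-pairing I need $R\comp[\Phi,\Psi]\comp(P\oplus Q)=[\Phi,\Psi]$. Evaluated on a block matrix, the left side is $R(\Phi(P(A))) + R(\Psi(Q(D)))$. This equals $\Phi(A)+\Psi(D)$ because $\Phi = R\comp\Phi\comp P$ and $\Psi = R \comp \Psi \comp Q$. The same relations give the commuting triangles: $[\Phi,\Psi]\comp e_l\comp P = \Phi\comp P = \Phi$, and likewise on the right.

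The main obstacle is uniqueness. Suppose $\Theta : (n+m,P\oplus Q)\to(o,R)$ satisfies $\Theta\comp e_l\comp P=\Phi$ and $\Theta\comp e_r\comp Q=\Psi$. Then
\[
\Theta = \Theta\comp(P\oplus Q) = \Theta\comp[e_l\comp P, e_r\comp Q].
\]
Evaluating on a block matrix, $(P\oplus Q)$ annihilates the off-diagonal blocks $B,C$ and sends the input to $e_l(P(A)) + e_r(Q(D))$. By linearity of $\Theta$ this gives $\Theta(e_l P A) + \Theta(e_r Q D) = \Phi(A)+\Psi(D)$, which is exactly $[\Phi,\Psi]$ applied to the block matrix. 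The key point is that precomposition with the dephasing-like idempotent $P\oplus Q$ kills the off-diagonal blocks, so a Karoubi morphism is determined by its restrictions to the two diagonal blocks. I expect this to be the only step needing care, because in $\CPTP$ itself such a $\Theta$ need not be unique.

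Finally, the $\Met$-enrichment over the Cartesian structure, if one wants it, would follow the $\Ban$ argument. The distance $\norm{[\Phi,\Psi]-[\Phi',\Psi']}$, taken in the relevant (diamond/trace) norm, reduces to inputs of the form $e_l(A)+e_r(D)$ by the same dephasing observation. It is then bounded by $\max\{\norm{\Phi-\Phi'},\norm{\Psi-\Psi'}\}$, because the trace norm of a block-diagonal matrix is $\norm{A}_1+\norm{D}_1$. This part I would defer to the subsequent discussion rather than include in the proof of the proposition.
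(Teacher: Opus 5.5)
Your proposal is correct and follows essentially the same approach as the paper: the same object $(n+m, P\oplus Q)$, the same injections and co-pairing, and the same uniqueness argument, writing $\Theta = \Theta\comp(P\oplus Q)$ so that the off-diagonal blocks are annihilated and linearity gives $\Phi(A)+\Psi(D)$. You additionally check that $P\oplus Q$ is idempotent and that $R\comp[\Phi,\Psi]\comp(P\oplus Q)=[\Phi,\Psi]$, i.e.\ that the co-pairing is a morphism of $\Kar{\CPTP}$, both of which the paper leaves implicit.
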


\begin{proof}
        Given $(n,P)$ and $(m,Q)$ consider the object $(n + m , P \oplus Q)$.
        The injections $\inl$ and $\inr$ are defined as $\inl := (P \oplus Q)
        \comp e_l \comp P$ and $\inr := (P \oplus Q) \comp e_r \comp Q$. Given
        $\Kar{\CPTP}$-morphisms $\Phi : (n, P) \to (o, R)$ and $\Psi : (m, Q)
        \to (o,R)$ their co-pairing is defined as above.
        The fact that $[\Phi,\Psi] \comp \inl = \Phi$ follows from the
        idempotency of $P$ and the fact that $\Phi \comp P = \Phi$. The same
        reasoning applies to $[\Phi,\Psi] \comp \inr = \Psi$. In order to prove
        unicity consider a suitably typed operator $\Upsilon$ such that $\Upsilon
        \comp \inl = \Phi$ and $\Upsilon \comp \inr = \Psi$. Then we reason,
        \begin{flalign*}
        \Upsilon
        \begin{pmatrix}
         A & B 
         \\
         C & D
        \end{pmatrix} 
        & =
        \Upsilon
        \begin{pmatrix}
         P(A) & 0 
         \\
         0 & Q(D)
        \end{pmatrix} 
        & \big \{ \text{$\Upsilon$ is a $\Kar{\CPTP}$-morphism} \big \}
        \\
        & =
        \Upsilon
        (\inl(A) + \inr(D))
        \\
        & = 
        \Upsilon
        (\inl(A)) + \Upsilon(\inr(D))
        \\
        & = 
        \Phi(A) + \Psi(D)
        \\
        &
        = 
        [\Phi,\Psi]
        \begin{pmatrix}
         A & B 
         \\
         C & D
        \end{pmatrix} 
\end{flalign*}
\end{proof}

In order to prove that the binary coproducts of $\Kar{\CPTP}$ are enriched over
$\Met$'s Cartesian structure, we will need the following result.
\begin{proposition}
        \label{prop:dist}
        The category $\Kar{\CPTP}$ is strictly distributive.
\end{proposition}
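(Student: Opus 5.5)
Our plan is to show directly that the canonical morphism
\[
        [\id \otimes \inl, \id \otimes \inr] : (n,P) \otimes (m,Q) + (n,P) \otimes (o,R) \to (n,P) \otimes ((m,Q) + (o,R))
\]
is an identity-like isomorphism. Here ``strictly'' means that, after identifying $\Complex^{n(m+o) \times n(m+o)}$ with block matrices over $\Complex^{nm \times nm}$ and $\Complex^{no\times no}$ via a fixed permutation of the basis, both sides are the same idempotent on the same underlying space. Concretely, the left-hand object is $(nm + no, (P \otimes Q) \oplus (P \otimes R))$ and the right-hand object is $(n(m+o), P \otimes (Q \oplus R))$. Since $\Complex^n \otimes (\Complex^m \oplus \Complex^o) \cong (\Complex^n \otimes \Complex^m) \oplus (\Complex^n \otimes \Complex^o)$ via a permutation unitary $U$, we will take $\dist$ to be the unitary channel $X \mapsto U X U^\dagger$ (precomposed/postcomposed with the relevant idempotents so that it lives in $\Kar{\CPTP}$).

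The key steps, in order, are as follows.

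\textbf{Step 1.} Verify the intertwining identity
\[
U\,\bigl(P \otimes (Q \oplus R)\bigr)(X)\,U^\dagger = \bigl((P\otimes Q)\oplus(P\otimes R)\bigr)(U X U^\dagger).
\]
We plan to do this on Kraus operators. If $P$ has Kraus operators $K_i$ and $Q,R$ have $L_j, M_l$, then $Q\oplus R = [e_l \comp Q, e_r\comp R]$ has Kraus operators $\begin{pmatrix} L_j & 0\end{pmatrix}$ stacked as $\begin{pmatrix} L_j & 0 \\ 0 & 0\end{pmatrix}$ and $\begin{pmatrix} 0&0\\0&M_l\end{pmatrix}$. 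Tensoring with $K_i$ and conjugating by $U$ gives exactly the block-diagonal Kraus operators of $(P\otimes Q)\oplus(P\otimes R)$. The off-diagonal blocks are killed on both sides because of the dephasing built into $\oplus$.

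\textbf{Step 2.} Deduce that $\Theta := U(-)U^\dagger \comp (P \otimes (Q\oplus R))$ is a $\Kar{\CPTP}$-morphism in one direction, and that $U^\dagger(-)U$ composed with the other idempotent is its inverse. This is a morphism because it is a composite of trace-preserving CP maps, and the equations $\Theta' \comp \Theta = P\otimes(Q\oplus R)$ (the identity in $\Kar{\CPTP}$) and symmetrically follow from Step 1 and idempotency.

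\textbf{Step 3.} Check that $\Theta^{-1}$ coincides with $[\id\otimes\inl, \id\otimes\inr]$. By the uniqueness clause in the previous proposition, it suffices to check that $\Theta^{-1}$ composed with each coproduct injection gives $\id\otimes\inl$, respectively $\id\otimes\inr$. Evaluating on $A \otimes B$ and extending by linearity, both sides send it to $P(A)\otimes$ (the upper-left block $Q(B)$), which is direct from the formulas for $e_l$, $\inl$.

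We expect the main obstacle to be Step 1. The difficulty is keeping track of the basis permutation $U$ (tensor does not literally distribute over block sums without reordering) and confirming that the dephasing in $Q \oplus R$ matches that of the outer $\oplus$ after tensoring with $P$: the off-diagonal blocks $B, C$ must be discarded consistently. We would first try to settle this on elementary matrices $E_{ab}\otimes E_{cd}$, where the computation is a finite case split on whether $c,d$ lie in the $m$- or $o$-block. ``Strict'' then follows because, modulo $U$, the two idempotent channels are equal rather than merely isomorphic.
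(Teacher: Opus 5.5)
Your Steps 1--3 are correct as far as they go. Conjugating $A\otimes B$ by $U$ splits it into the blocks $A\otimes B_{ij}$. Both sides of your intertwining identity then return $P(A)\otimes Q(B_{11})$ and $P(A)\otimes R(B_{22})$ on the diagonal and $0$ elsewhere. So $\Theta$ and $\Theta'$ are mutually inverse $\Kar{\CPTP}$-morphisms, and uniqueness of copairing identifies $\Theta'$ with $[\id\otimes\inl,\id\otimes\inr]$.

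What this establishes, however, is ordinary distributivity. The proposition asserts strict distributivity: the two composite objects literally coincide and the canonical map is the identity. ``Equal modulo $U$'' only means ``isomorphic via $U$'', so your closing sentence does not earn the word ``strictly''. Worse, on the side you chose, strictness is false as soon as $n>1$: $(n(m+o),P\otimes(Q\oplus R))$ and $(nm+no,(P\otimes Q)\oplus(P\otimes R))$ carry different idempotents on the same matrix algebra. Take $n=2$, $m=o=1$, $P=\id_2$, $Q=R=\id_1$. The first idempotent is $\id_2\otimes(\id_1\oplus\id_1)$, which keeps the $(1,3)$ entry of a $4\times 4$ matrix and kills the $(1,2)$ entry. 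The second is $\id_2\oplus\id_2$, which does the opposite. So no choice of canonical map can be an identity there.

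The missing observation is that you should distribute on the other side, as the paper does. With the Kronecker convention the first tensor factor indexes the outer blocks, so $\Complex^{n+m}\otimes\Complex^{o}$ is literally $\Complex^{no}\oplus\Complex^{mo}$ and your $U$ is the identity. Your Step 1 computation then becomes the paper's on-the-nose identity $(P\oplus Q)\otimes R=(P\otimes R)\oplus(Q\otimes R)$, checked on $X\otimes E$ with $X$ a $2\times 2$ block matrix. Hence $((n,P)+(m,Q))\otimes(o,R)$ and $(n,P)\otimes(o,R)+(m,Q)\otimes(o,R)$ are the same object. Step 2 becomes unnecessary, and Step 3 collapses to $\inl\otimes\id=\inl$ and $\inr\otimes\id=\inr$, whence $[\inl\otimes\id,\inr\otimes\id]=\id$ by uniqueness.

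This right-handed strict form is exactly what is invoked later, in $\norm{[T,S]\otimes\id_{n+m}}=\norm{[T\otimes\id_{n+m},S\otimes\id_{n+m}]}$. Your left-handed isomorphism is then recovered through the symmetry, and the permutation this introduces is essentially your $U$.
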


\begin{proof}
Take objects $(n,P)$, $(m,Q)$, $(o,R)$, and note that the following equations
are sound.
\[
        \begin{cases}
                \big ((n,P) + (m,Q) \big ) \otimes (o,R) = (no + mo, (P
                \oplus Q) \otimes R)
                \\
                \big (n,P) \otimes (o,R) + (m,Q) \otimes (o,R) = (no + mo, 
                        (P \otimes R) \oplus (Q \otimes R))
        \end{cases}
\]
They tell that the two relevant composite objects are actually the same if $(P \oplus Q)
\otimes R = (P \otimes R) \oplus (Q \otimes R)$. So we reason,
        \begin{align*}
        (P \oplus Q) \otimes R
        \left (
        \begin{pmatrix}
         A & B 
         \\
         C  & D 
        \end{pmatrix} 
        \otimes E
        \right )
        & =
        \left (
        \begin{pmatrix}
         P(A) &  0
         \\
         0 & Q(D)
        \end{pmatrix} 
        \otimes R(E)
        \right )
        \\
        & 
        =
        \begin{pmatrix}
         P(A) \otimes R(E) &  0
         \\
         0 & Q(D) \otimes R(E)
        \end{pmatrix} 
        \\
        & 
        = (P\otimes R \oplus Q \otimes R)
        \begin{pmatrix}
         A \otimes E & B \otimes E
         \\
         C \otimes E & D \otimes E
        \end{pmatrix} 
        \\
        & 
        = (P\otimes R \oplus Q \otimes R)
        \left ( \begin{pmatrix}
         A  & B
         \\
         C & D
        \end{pmatrix} 
        \otimes E
        \right )
        \end{align*}
        which indeed establishes an equality between the two composite objects.  The
        final step then is to prove that the equation $[\inl \otimes \id, \inr
        \otimes \id] = \id$ holds, which by unicity reduces to the equations,
        \[
                \begin{cases}
                        \inl \otimes \id = \inl \\
                        \inr \otimes \id = \inr
                \end{cases}
        \]
        and whose proof is direct.
\end{proof}

We will now focus on the $\Met$-enriched setting. Specifically we start by
further detailing the $\Met$-enriched structure of $\Kar{\CPTP}$ and then prove
that the latter's binary coproducts are enriched over $\Met$'s Cartesian
structure. First recall that given a matrix $A \in \Complex^{n\times n}$ its
trace norm $\norm{A}$ (or Schatten 1-norm) is defined by $\norm{A} =
\mathop{Tr}\sqrt{A^\dagger A}$~\cite{watrous18}. This induces the usual norm
for operators $T : \Complex^{n \times n} \to \Complex^{m \times m}$,
\[
        \norm{T} = \bigvee \left \{ \norm{A} \mid A \in \Complex^{n \times n}, \norm{A} \leq 1 
        \right \}
\]
However as explained in~\cite{watrous18}, this norm is not stable under tensoring.
This makes it impossible to enrich the monoidal structure of $\CPTP$ via
such a norm, and motivates the adoption of the \emph{diamond norm} instead,
\[
        \norm{T}_\diamond = \norm{T \otimes \id_n}
\]
which indeed satisfies the pre-requisites for making $\CPTP$ $\Met$-symmetric
monoidal~\cite{dahlqvist23}. Specifically we set the distance between quantum
channels $\Phi$ and $\Psi$ in $\CPTP$ to be $\norm{\Phi - \Psi}_\diamond$.
Next, note that the morphisms of $\Kar{\CPTP}$ are a subset of the
$\CPTP$-morphisms and that the symmetric monoidal structure of $\Kar{\CPTP}$ is
inherited from $\CPTP$. It is then clear that $\Kar{\CPTP}$ becomes
$\Met$-symmetric monoidal when one adopts the respective restriction of the
metric structure in $\CPTP$. As for binary coproducts and their Cartesian
enrichment, we will first show that the inequation $\norm{[T,S]} \leq \max \{
\norm{T},\norm{S} \}$ holds for all operators $T : \Complex^{n \times n} \to
\Complex^{o \times o}$ and $S : \Complex^{m \times m} \to \Complex^{o \otimes
o}$. Then we will extend this property to the setting of the diamond norm.
\begin{flalign*}
        \norm{[T,S]}
        & =
        \bigvee \left \{ \norm{T(A) + S(D)} \mid 
                \left \lVert
                      \begin{pmatrix}
                              A & B
                              \\
                              C & D
                      \end{pmatrix}
                \right \rVert \leq 1 
        \right \}
        \\
        & \leq
        \bigvee \left \{ \norm{T(A) + S(D)} \mid 
                \left \lVert
                      \begin{pmatrix}
                              A & 0
                              \\
                              0 & D
                      \end{pmatrix}
                \right \rVert \leq 1 
        \right \}
        & \big \{ \text{$\norm{\id \oplus \id} = 1$} \big \}
        \\
        & =
        \bigvee \left \{ \norm{T(A) + S(D)} \mid 
                \left \lVert
                      \begin{pmatrix}
                              A & 0
                              \\
                              0 & D
                      \end{pmatrix}
                \right \rVert = 1 
        \right \}
        & 
        \\
        & =
        \bigvee \left \{ \norm{T(A) + S(D)} \mid 
                \norm{A} + \norm{D} = 1
        \right \}
        &  
        \big \{ \text{see~\cite[1.148]{watrous18}} \big \}
        \\
        & \leq
        \bigvee \left \{ \norm{T(A)} + \norm{S(D)} \mid 
                \norm{A} + \norm{D} = 1
        \right \}
        & 
        \\
        & =
        \bigvee \left \{ \norm{A}\norm{T(\sfrac{1}{\norm{A}}A)} + 
                \norm{D}\norm{S(\sfrac{1}{\norm{D}}D)} \mid 
                \norm{A} + \norm{D} = 1
        \right \}
        & 
        \\
        & \leq
        \bigvee \left \{ \max \{ \norm{T(\sfrac{1}{\norm{A}}A)},
                \norm{S(\sfrac{1}{\norm{D}}D)} \} \mid 
                \norm{A} + \norm{D} = 1
        \right \}
        & 
        \\
        &
        \leq \max \{ \norm{T},\norm{S} \} 
\end{flalign*}
This property extends to the diamond norm via the following reasoning.
\begin{flalign*}
        \norm{[T,S]}_\diamond 
        & =
        \norm{[T,S] \otimes \id_{n+m}}
        \\
        & = 
        \norm{[T \otimes \id_{n+m},S \otimes \id_{n+m}] }
        & \text{\{Proposition~\ref{prop:dist}\}}
        \\
        & \leq
        \max \{ \norm{T \otimes \id_{n+m}}, \norm{S \otimes \id_{n+m}} \}
        \\
        & = 
        \max \{ \norm{T \otimes \id_{n}}, \norm{S \otimes \id_{m}} \}
        &
        \text{\{\cite[Theorem 3.46]{watrous18}\}}
        \\
        & =
        \max \{ \norm{T}_\diamond, \norm{S}_\diamond \}
\end{flalign*}
From this last result it follows, similarly to the case of $\Ban$, that the
binary coproducts of $\Kar{\CPTP}$ are enriched over $\Met$'s Cartesian
structure. This thus establishes a $\Met$-enriched co-Cartesian symmetric
monoidal category. Unfortunately, it is unclear whether this category is
autonomous, a core requisite for being a model of our calculus (recall
Definition~\ref{defn:model}). So in order to overcome the issue, we apply our
results from the previous section. More specifically we adopt the category
$[\Kar{\CPTP}^\cop,\Met]_{\fp}$, clearly a model of our calculus (recall
Section~\ref{sec:models}), and the respective Yoneda embedding $\Kar{\CPTP} \to
[\Kar{\CPTP}^\cop,\Met]_{\fp}$ which preserves not only the symmetric monoidal
structure of $\Kar{\CPTP}$ but also binary coproducts.

As in the probabilistic case, we could now develop a metric $\lambda$-theory
for reasoning about Cauchy sequences of \eg\ \emph{quantum} random walks. But
since this would be analogous in spirit to our illustration in the preceding
subsection, we refrain from doing so.

\section{Conclusions and future work}
\label{sec:conc}

We extended quantalic linear $\lambda$-calculus~\cite{dahlqvist22,dahlqvist23}
with additive disjunction $\oplus$, and showed that the extension retains
several desirable properties of the original calculus if one works with
\emph{continuous} quantales.  While we did not explore the case of
\emph{additive conjunction} (which categorically corresponds to products) we
expect that it can be added to the calculus in an analogous way. 

Our work opens up several avenues of research. For example, since
the calculus' linearity can be overly restrictive in some programming
paradigms, it would be interesting to merge our results with quantalic
\emph{graded} $\lambda$-calculus introduced in~\cite{dahlqvist23b}.  As another
example, the fact that the gluing construction extends to our quantalic
setting, together with the established reflection,
\[
        \mathscr{L} :  [\catC^\cop,\VCatSe] \longrightarrow [\catC^\cop,\VCatSe]_{\fp}
\]
(recall Section~\ref{sec:models}) makes possible to study `conservativity'
results of our calculus (and similar ones) in the style of
Lafont~\cite{lafont88}. Also, the same reflection can be used for exploring
corresponding `concrete completeness' results~\cite{johnstone02}.  As yet
another example, the rich structure of the (enriched) Karoubi envelope
$\Kar{\CPTP}$ -- which in particular supports so-called splits of
`symmetrisations' -- makes this category a promising candidate for interpreting
graded modalities~\cite{dahlqvist23b,dahlqvist19}. It would be interesting to
explore the connections between such modalities and symmetric subspaces which
have vast applications in quantum information theory~\cite[Chapter
7]{watrous18}.

\bibliographystyle{alpha} 
\bibliography{biblio}

\end{document}